\documentclass[a4paper,UKenglish,cleveref, autoref, thm-restate,numberwithinsect]{lipics-v2021}

\pdfoutput=1 %
\hideLIPIcs  %

\graphicspath{{graphics/}}%

\title{Point Set Embeddability with List Constraints}

\author{Thomas Depian}{TU Wien, Austria}{tdepian@ac.tuwien.ac.at}{https://orcid.org/0009-0003-7498-6271}{Supported by the Vienna Science and Technology Fund (WWTF) [10.47379/ICT22029].}

\author{Joseph Dorfer}{TU Graz, Austria}{joseph.dorfer@tugraz.at}{https://orcid.org/0009-0004-9276-7870}{Supported by the Austrian Science Fund (FWF) 10.55776/DOC183.}

\author{Boris Klemz}{Universität Würzburg, Germany}{firstname.lastname@uni-wuerzburg.de}{https://orcid.org/0000-0002-4532-3765}{}

\author{Matthias Pfretzschner}{University of Passau, Germany}{pfretzschner@fim.uni-passau.de}{https://orcid.org/0000-0002-5378-1694}{}

\author{Lena Schlipf}{Universit\"at T\"ubingen, Germany}{lena.schlipf@uni-tuebingen.de}{https://orcid.org/0000-0001-7043-1867}{Supported by the  German Research Foundation (DFG grant SCHL 2331/1-1)}

\authorrunning{T. Depian, J. Dorfer, B. Klemz, M. Pfretzschner, and L. Schlipf} %

\Copyright{Thomas Depian, Joseph Dorfer, Boris Klemz, Matthias Pfretzschner, and Lena Schlipf} %

\ccsdesc[500]{Theory of computation~Design and analysis of algorithms}
\ccsdesc[500]{Theory of computation~Fixed parameter tractability}
\ccsdesc[500]{Human-centered computing~Graph drawings}
\ccsdesc[300]{Theory of computation~Computational geometry}
\ccsdesc[500]{Theory of computation~Parameterized complexity and exact algorithms}
\ccsdesc[300]{Theory of computation~Dynamic programming}

\keywords{Point set embedding, List constraints, Parameterized algorithms and complexity, NP-hardness, Dynamic programming, Graph drawing, Computational geometry} %

\acknowledgements{This
work started at the 20th European Research Week on Geometric Graphs (GGWeek'25)
in Chorin, Germany. We thank
the organizers for fruitful and nice atmosphere.}%

\nolinenumbers %

\EventEditors{Maarten L\"{o}ffler and Silvia Miksch}
\EventNoEds{2}
\EventLongTitle{34th International Symposium on Graph Drawing and Network Visualization (GD 2026)}
\EventShortTitle{GD 2026}
\EventAcronym{GD}
\EventYear{2026}
\EventDate{August 19--21, 2026}
\EventLocation{Ontario, Canada}
\EventLogo{}
\SeriesVolume{396}
\ArticleNo{33}

\usepackage{xspace}
\usepackage{todonotes}
\usepackage{complexity}
\usepackage{framed}
\usepackage{mathtools}
\usepackage{booktabs}
\usepackage{amsmath}
\usepackage{pifont}
\usepackage{cite}

\usetikzlibrary{arrows.meta,patterns,ipe}

\newboolean{long}
\newcommand{\AppendixSymbol}{\ding{72}}
\setboolean{long}{true} %
\ifthenelse{\boolean{long}}{
	\NewDocumentEnvironment{prooflater}{m}{\begin{proof}}{\end{proof}\ignorespacesafterend}
	\NewDocumentEnvironment{proofsketch}{o +b}{}{\ignorespacesafterend}
	\newcommand{\restateref}[1]{}
	\NewDocumentEnvironment{statelater}{m}{}{}

	\NewDocumentCommand{\onlyShort}{+m}{}
	\NewDocumentCommand{\onlyLong}{+m}{#1}
	\NewDocumentCommand{\shortLong}{+m +m}{#2}
}{
	\NewDocumentEnvironment{prooflater}{m +b}{%
		\expandafter\global\expandafter\def\csname#1\endcsname{\begin{proof}#2\end{proof}}%
	}{\ignorespacesafterend}
	\NewDocumentEnvironment{proofsketch}{O{Proof sketch.}}{\begin{proof}[#1]}{\end{proof}\ignorespacesafterend}
	\usepackage{apptools}

	\newcommand{\restateref}[1]{[\IfAppendix{\AppendixSymbol{}}{\AppendixSymbol{}}]}
	
	\NewDocumentEnvironment{statelater}{m +b}{%
		\expandafter\global\expandafter\def\csname#1\endcsname{#2}%
	}{\ignorespacesafterend}

	\NewDocumentCommand{\onlyShort}{+m}{#1}
	\NewDocumentCommand{\onlyLong}{+m}{}
	\NewDocumentCommand{\shortLong}{+m +m}{#1}
}
\let\oldrestatable\restatable
\def\restatable{\expandafter\oldrestatable}

\newcommand{\probname}[1]{{\normalfont\textsc{#1}}}

\newcommand{\ListPointEmbedding}{\probname{GLPSE}\xspace}
\newcommand{\PointEmbedding}{\probname{GPSE}\xspace}

\newcommand{\ConvexListPointEmbedding}{\probname{C-GLPSE}\xspace}

\newcommand{\ListPointEmbeddingLong}{\probname{Geometric List Point Set Embeddability}\xspace}
\newcommand{\PointEmbeddingLong}{\probname{Geometric Point Set Embeddability}\xspace}

\newcommand{\PartialListPointEmbedding}{\probname{P-\ListPointEmbedding}\xspace}

\newcommand{\Instance}{\ensuremath{\mathcal{I}}\xspace}
\newcommand{\InstanceLong}{\ensuremath{(G, S, L)}\xspace}

\definecolor{colorAi}{rgb}{0 0.439 0.38} %
\definecolor{colorAii}{rgb}{0.263 0.42 0.733} %
\newcommand{\Ai}{\probname{ALG1}\xspace}
\newcommand{\Aii}{\probname{ALG2}\xspace} %
\newcommand{\algs}[2]{{{[} \textcolor{colorAi}{#1} {|} \textcolor{colorAii}{#2}
{]}}}

\NewDocumentCommand{\Drawing}{o}{\ensuremath{\Gamma\IfNoValueF{#1}{{(#1)}}}\xspace}

\newcommand{\vcn}{\textsf{\textup{vcn}}}

\newcommand{\nadd}{\ensuremath{n_{\mathrm{add}}}\xspace}
\newcommand{\Vadd}{\ensuremath{V_{\mathrm{add}}}\xspace}

\newcommand{\Size}[1]{\ensuremath{\left\vert #1 \right\vert}}
\newcommand{\BigO}[1]{\ensuremath{\mathcal{O}(#1)}}
\DeclareMathOperator*{\partition}{\mathbin{\mathaccent\cdot\cup}}

\theoremstyle{claimstyle}
\newtheorem{ob}[claim]{Observation}

\newcommand{\NewText}[1]{{#1}}

\begin{document}

\maketitle

\begin{abstract}%
Deciding whether a given graph admits a planar straight-line drawing where each vertex is placed on some point from a given finite point set is known as \probname{Point Set Embeddability} and \NewText{is} a classical problem in graph drawing.
In this paper, we study the more general embeddability question where
the placement of each vertex $v$ is restricted to a list $L(v)$ of admissible points.

We first study the case where the given point set is in convex position.
We show that this case is \NP-hard even if the given graph is a matching and bi-labeled, i.e., each vertex has at most 2 admissible points.
On the positive side, we present
two efficient algorithms for the case where the given graph~$G$ is connected (and not necessarily bi-labeled):
if $G$ is equipped with a combinatorial embedding that needs to be respected, we can solve the problem in polynomial time;
otherwise we can solve it in \FPT-time with regard to the maximum vertex degree.
In particular, this answers an open question by Frati, Glisse, Lenhart, Liotta, Mchedlidze, and Nishat~[GD'13]. %

We then turn our attention to the more general case where the given point set is not necessarily in convex position. Here, we show \NP-hardness for bi-labeled paths; notably these graphs have a %
unique combinatorial embedding and maximum degree two, thereby creating a stark contrast to our algorithms for the convex case. %
We also present an \FPT-algorithm with respect to the vertex cover number for the special case of bi-labeled graphs.
We complement this latter result by establishing para\NP-hardness in the tri-labeled setting for vertex cover number~2 and polynomial-time solvability for vertex cover number 1 and arbitrary $L$.

Finally, we study optimization and extension variants, where we want to maximize the number of edges or extend a partial drawing, respectively.
For the former, we show \APX-hardness and for the latter, we provide a parameterized complexity dichotomy under natural extension parameters.

\end{abstract}

\section{Introduction}
\label{sec:introduction}
Given an $n$-vertex graph $G$, a finite point set $S$ in the plane, and a set of \emph{labels} $L(v) \subseteq S$ for each vertex $v$ of $G$, the problem \ListPointEmbeddingLong (\ListPointEmbedding) asks whether $G$ admits a planar straight-line drawing where each vertex $v$ is placed on one of its labels.
This problem can be viewed as a generalization or variation of several well-studied problems in graph drawing and computational geometry.
In particular, the well-known problem \PointEmbeddingLong (\PointEmbedding) is the special case of \ListPointEmbedding where each vertex $v$ may be placed on any point of $S$, i.e., $L(v) = S$.
While it is known that, in this setting, outerplanar graphs can be embedded on any point set $S$ in general position with $|S| \geq n$ \cite{CU.SLE.1996, GMP+.Ept.1991}, \PointEmbedding becomes \NP-hard for non-outerplanar input graphs~\cite{Cab.Pev.2006}.
On the positive side, if the combinatorial embedding is part of the input and both the treewidth and the face degree of the graph is constant, then \PointEmbedding can be solved in polynomial time~\cite{BV.pse.2012}.
However, both conditions are necessary: If one of them is dropped, the problem is \NP-hard, even when the combinatorial embedding is part of the input~\cite{BV.pse.2012,DM.HPS.2012}.
Other research on \PointEmbedding also focused on embeddings with few bends per edge (see, e.g.,~\cite{KW.EVP.1999,PW.EPG.2001}) and on finding universal point sets, i.e., point sets that allow for a planar straight-line embedding of any planar graph on~$n$ vertices (see, e.g.,~\cite{ABB+.SUP.2018,BCD+.SUP.2014,CHK.UPS.2015,DPP.Hdp.1990,Kur.Alb.2004,SSS.NUP.2020,Sch.EPG.1990}).
We remark, however, that positive results and universal point sets for \PointEmbedding do not necessarily translate to \ListPointEmbedding due to the constraints on the allowed placement for each vertex.

Previous work also focused on a special case of \ListPointEmbedding where the point set $S$ is partitioned into $k$ color classes and each vertex is assigned to a specific color class.
While research mostly focused on finding drawings with few bends per edge (see, e.g.,~\cite{BDGL.Dcg.2008,DGDL+.kCP.2007,GGL+.CCC.2020,GJL.CPS.2021,DGLT.EGT.2006,DGLT.DCG.2008}), more results for straight-line embeddings are known for the case $k = 2$.
In particular, Frati, Glisse, Lenhart, Liotta, Mchedlidze, and Nishat~\cite{FGL+.PSE.2013} showed that deciding whether a bi-colored tree admits a planar straight-line embedding is \NP-complete, even if the points are in convex position.
They leave the complexity question for binary trees open, stating that they suspect it is \NP-hard as well.
However, if the colors of the convex points alternate, the embeddability question can be answered in linear time~\cite{FGL+.PSE.2013}.

Another closely related problem is \probname{Geometric Partial Drawing Extensibility}, which asks whether a given straight-line drawing of a subgraph $H \subseteq G$ can be extended to a straight-line drawing of $G$ without modifying the drawing of~$H$.
While the problem is linear-time solvable for topological drawings~\cite{ADBF+.TPP.2015}, it was shown to be \NP-hard for straight-line drawings~\cite{Pat.EPS.2006}, and it remains open whether it is $\exists\mathbb{R}$-complete~\cite{SCM.ETR.2024}.
However, polynomial-time algorithms exist for special cases and other variants (see, e.g.,~\cite{CET+.DGP.2012,MNR.ECP.2015,PW.EPG.2001}).
If we restrict the possible positions of the missing vertices to a finite set $S'$, this problem can also be modeled as a variant of \ListPointEmbedding where $|L(u)| = 1$ for all $u \in V(H)$ and $L(v) = S'$ for all $v \in V(G)\setminus V(H)$.

\begin{figure}[t]
	\centering
	\input{graphics/overview}
	\caption{Complexity landscape of \ListPointEmbedding. The colors indicate polynomial-time solvable (dark blue), \FPT\ (light blue), \XP-tractable and \W[1]-hard (orange), and \NP-hard (red) settings. Positive results hold for arbitrary $L$, whereas negative results in the non-extension setting hold for bi-labeled instances except for results marked by $\dag$ (algorithms for bi-labeled instances) and $\ddag$ (hardness for tri-labeled instances).
    The only result that is not depicted is our \APX-hardness (\Cref{thm:preapx}).}
	\label{fig:overview}
\end{figure}

\subparagraph{Contribution.}
\enlargethispage{1\baselineskip}
We give an extensive characterization of the (parameterized) complexity of \ListPointEmbedding, which generalizes all aforementioned problems.  
In \Cref{sec:convex}, we 
consider the restricted case where the point set $S$ lies in convex position.
Here, we show that \ListPointEmbedding remains \NP-complete %
even if every vertex $v$ is bi-labeled (i.e., $|L(v)| \leq 2$) and the graph is a matching or a tree of height 2.
On the positive side, for connected input graphs with arbitrary label sizes, we give a polynomial-time algorithm for instances equipped with a combinatorial embedding and we show that the problem is \FPT\footnote{We assume familiarity with basic concepts of \emph{parameterized complexity theory}, in particular, the notions of \emph{fixed-parameter tractability} (\FPT), \XP, and \W[1]-hardness; see~\cite{CFK+.PA.2015} for an introduction.} with respect to the maximum vertex degree,
disproving the aforementioned suspicion of Frati et al.~\cite{FGL+.PSE.2013} that the embeddability problem for binary trees on 2-colored  convex point sets remains \NP-complete (unless $\P=\NP$).

In \Cref{sec:general}, we turn to general point sets.
In contrast to the convex case, \ListPointEmbedding here becomes \NP-complete, even if the graph is a bi-labeled cycle or path (and thus has a unique combinatorial embedding and maximum degree $2$).
In combination with the hardness for bi-labeled matchings, this already rules out fixed-parameter-tractability for almost all structural graph parameters.
To complement this, we show that, for bi-labeled graphs, \ListPointEmbedding is \FPT\ with respect to the vertex cover number.
This result is tight, because the problem becomes \NP-complete if the input graph consists of two tri-labeled (i.e., $\Size{L(v)} \leq 3$ for every $v \in V(G)$) stars and thus has vertex cover number 2.
In contrast, we establish polynomial-time solvability for vertex cover number 1 and arbitrary~$L$.
Moreover, we extend these hardness results to the optimization problem of computing a realization of a subgraph with maximum number of edges.
In particular, we establish \APX-hardness even for bi-labeled graphs with vertex cover number~2, implying that no polynomial-time approximation scheme (PTAS) exists unless $\P = \NP$.

Finally, in \Cref{sec:extension}, we focus on additional parameters in light of the partial drawing extension problem.
While \ListPointEmbedding is trivially in \XP\ parameterized by the number of vertices with more than one label, we show that the problem is \W[1]-hard for the same parameter.
However, we show that if we combine this parameter with the \emph{surplus} $|S| - n$, the problem becomes \FPT.
We summarize our results in \Cref{fig:overview}.

\NewText{Note that we allow, in contrast to some earlier work~\cite{Cab.Pev.2006}, more points in $S$ than vertices in $G$, i.e., $\Size{S} > \Size{V(G)}$.
While all of our positive results hold in this more general variant, also some of the negative results can be readily extended to the more restrictive variant with $\Size{S} = \Size{V(G)}$ by introducing suitable dummy vertices.
Finally, observe that in our last section, we also quantify this ``additional generality'' by considering the surplus as parameter.
}

\onlyShort{
\smallskip
\noindent
\emph{Details for statements with a \AppendixSymbol{} %
\NewText{can be found in the full version~\cite{ARXIV}.}
}
}

\section{Preliminaries}
\label{sec:preliminaries}
For an integer $q \geq 1$, we use $[q]$ as a shorthand for the set $\{1, 2, \ldots, q\}$.
We assume familiarity with standard graph terminology~\cite{Die.GT4.2012}.
Let $G = (V, E)$ be a a simple and undirected graph with vertex set $V(G)$ and edge set $E(G)$.
A \emph{straight-line drawing} \Drawing of $G$ maps each vertex $v \in V(G)$ to a point $\Drawing(v) \subseteq \mathbb{R}^2$ and each edge $\{u,v\}\in E(G)$ to the straight-line segment $\Drawing(u)\Drawing(v)$.
A drawing is \emph{planar} if every pair of edges intersect only at a common endpoint.

Let $S \subset \mathbb{R}^2$ be a point set in the plane with $\Size{S} \geq \Size{V(G)}$ and let $L\colon V(G) \to 2^S$ be a function that specifies for each vertex $v\in V(G)$ its set $L(v) \subseteq S$ of \emph{admissible} points.
A \emph{$L$-realization} $\Drawing\colon V \to S$ of $G$ on $S$ is a planar straight-line drawing of $G$ where every vertex $v \in V(G)$ is placed on an admissible point $\Drawing(v) \in L(v)$.
We use \emph{realization} as a synonym for $L$-realization if the function $L$ is clear from the context.
In this paper, we study the problem \ListPointEmbedding, which asks whether there exists a $L$-realization of $G$ on $S$.

For an instance $\Instance = \InstanceLong$ of \ListPointEmbedding, we define $n \coloneqq \Size{V(G)}$, $m \coloneqq \Size{E(G)}$, $s \coloneqq \Size{S}$, $\ell \coloneqq \max_{v \in V(G)} \Size{L(v)}$, and $\Delta \coloneqq \max_{v \in V(G)} \deg(v)$.
Without loss of generality, we assume $G$ to be planar and have $m \in \BigO{n}$.
We call $G$ bi- or tri-labeled if $\ell \leq 2$ or $\ell \leq 3$, respectively.

\section{Embeddings on Convex Point Sets}
\label{sec:convex}
We start our investigation of \ListPointEmbedding by focusing on the case where $S$ is in convex position.
We refer to this restricted case as \ConvexListPointEmbedding and assume that the points of $S$ are given in their counterclockwise order on the convex hull.

\subsection{Embedding Bi-Labeled Matchings and Trees is Hard}
\label{sec:convex-matching}

\begin{figure}
        \centering
        \includegraphics[page=4,width=\linewidth]{Gadgets}
        \caption{\textbf{\textsf{(a)}} Variable gadget with the first part of the containment gadget; and \textbf{\textsf{(b)}} clause gadget with the second part of the containment gadget.}
        \label{fig:hardmat}
\end{figure}
{
We show that \ConvexListPointEmbedding is \NP-complete even for bi-labeled graphs which are matchings or trees of height two by reducing from \probname{3SAT}. We can assume that clauses contain exactly three literals, as our construction permits copies of the same literal in a clause.

\begin{restatable}\restateref{thm:hardness-matching}{theorem}{theoremHardnessMatching}
	\label{thm:hardness-matching}
    \ConvexListPointEmbedding is \NP-complete even for bi-labeled matchings or bi-labeled trees of height $2$. 
\end{restatable}
\begin{proofsketch} %
We first construct an instance $\Instance = \InstanceLong$ of \ConvexListPointEmbedding where $G$ is a matching as follows (we discuss the modification for trees in the full proof\NewText{~\cite{ARXIV}}).
\NewText{
We will represent variables, clauses, and variable-clause containment relations via edges in $G$.
In particular, all variable edges have their labels in consecutive order on the convex hull, followed by the labels for all clauses edges, i.e., they can be separated by a line through the convex point set. %
We now sketch how these edges are constructed.
}%
For every variable $x_i$, we add one edge to $E(G)$. One vertex can only be placed on the point $x_i^a$ and the other vertex can be placed on either $x_i^b$ in clockwise direction %
\NewText{from} $x_i^a$ or $x_i^{b'}$ in counterclockwise direction %
\NewText{from} $x_i^{a'}$ as depicted in \cref{fig:hardmat}a, %
encoding different truth assignments to the variable~$x_i$. If $x_i^{a}$ is connected to $x_i^{b}$, then this corresponds to $x_i$ being set to \textsc{False}. If it is connected to $x_i^{b'}$, then this corresponds to $x_i$ being set to \textsc{True}.
For every clause $C_j$, we add one edge to $E(G)$. One of its vertices can be placed on points $C_j^a$ or $C_j^{a'}$ and the other on $C_j^b$ or $C_j^{b'}$, leaving a total of four combinations to embed the edge. This is depicted by the dashed lines in \cref{fig:hardmat}b.
For every containment relation $x_i \in C_j$ or $\lnot x_i\in C_j$, we add one edge to~$E(G)$. One vertex can be placed on $xc_{i,j}^a$ or $xc_{i,j}^{a'}$, the other on $xc_{i,j}^b$ or $xc_{i,j}^{b'}$.%
The points with these labels are placed such that the line $xc_{i,j}^axc_{i,j}^b$ crosses only the line $x_i^ax_i^b$, but no other potential edge in the embedding of $G$. A symmetric description can be made for the straight line between $\lnot xc_{i,j}^a$ and $\lnot xc_{i,j}^b$ crossing only the straight line $x_i^ax_i^{b'}$. In the same way the line $xc_{i,j}^{a'}xc_{i,j}^{b'}$ only crosses the line $C_j^{a'}C_j^{b'}$ and one other straight line of the clause gadget in the embedding. Different literals belonging to the same clause cross a different second straight line of the potential embeddings of the clause gadget, as illustrated in  \cref{fig:hardmat}b. \NewText{We illustrate a full instance in \cref{fig:sub1} and a solution in \cref{fig:sub2}. The modification of the instance for trees can be seen in \cref{fig:sub3} and the solution in \cref{fig:sub4}.}%
\end{proofsketch}
\begin{proof}%
    We first construct an instance $\Instance = \InstanceLong$ of \ConvexListPointEmbedding where $G$ is a matching as follows (we discuss the modification for trees in the end).
\NewText{
We will represent variables, clauses, and variable-clause containment relations via edges in $G$.
We place the labels on the convex point set in such a way that labels of variable edges and variables of clause edges do not appear in alternating order. All the variable edges have their labels in consecutive order on the convex hull. Then, all the clause edges have their labels in consecutive order on the convex hull. In particular, we can draw some line through the convex point set that separates all the clause edges from all the variable gadgets. Which will be useful in the next step of the construction.

For every variable $x_i$, we add one edge to $E(G)$. One vertex can only be placed on the point $x_i^a$ and the other vertex can be placed on either $x_i^b$ in clockwise direction %
\NewText{from} $x_i^a$ or~$x_i^{b'}$ in counterclockwise direction %
\NewText{from} $x_i^{a'}$ as depicted in \cref{fig:hardmat}a, %
encoding different truth assignments to the variable~$x_i$. If $x_i^{a}$ is connected to $x_i^{b}$, then this corresponds to $x_i$ being set to \textsc{False}. If it is connected to $x_i^{b'}$, then this corresponds to $x_i$ being set to \textsc{True}.
For every clause $C_j$, we add one edge to $E(G)$. One of its vertices can be placed on points $C_j^a$ or $C_j^{a'}$ and the other on $C_j^b$ or $C_j^{b'}$, leaving a total of four combinations to embed the edge. This is depicted by the dashed lines in \cref{fig:hardmat}b.
For every containment relation $x_i \in C_j$ or $\lnot x_i\in C_j$, we add one edge to~$E(G)$. One vertex can be placed on $xc_{i,j}^a$ or $xc_{i,j}^{a'}$, the other on $xc_{i,j}^b$ or $xc_{i,j}^{b'}$. %
The points with these labels are placed such that the line $xc_{i,j}^axc_{i,j}^b$ crosses only the line $x_i^ax_i^b$, but no other potential edge in the embedding of $G$. A symmetric description can be made for the straight line between $\lnot xc_{i,j}^a$ and $\lnot xc_{i,j}^b$ crossing only the straight line $x_i^ax_i^{b'}$. In the same way the line $xc_{i,j}^{a'}xc_{i,j}^{b'}$ only crosses the line $C_j^{a'}C_j^{b'}$ and one other straight line of the clause gadget in the embedding. 

\begin{figure}
    \centering
	\includegraphics[page=3,scale=0.6]{Gadgets.pdf}
    \caption{%
    \NewText{Line separating the clause edges from the variable gadgets.}}
	\label{fig:separate}
\end{figure}

To avoid the two cases in which $xc_{i_1,j}^{a'}$ and $xc_{i_1,j}^b$ or $xc_{i_1,j}^a$ and $xc_{i_1,j}^{b'}$ are edges of the straight line embedding, we add a separating line as depicted in \cref{fig:separate} and place all points corresponding to the construction of the variable gadgets on one side and points for clause gadgets on the other. The straight lines described at the start of this paragraph will then always cross the separating line. Consequently, they will never be part of a straight line embedding. %

Now in any realization of the graph $G$ each containment gadget need to either connect the two vertices on the variable side (and thus requires the clause gadget to represent the right truth assignment) or connect the two vertices on the clause side (and thus eliminates one of three possible spots on where to place the clause gadget).
We now argue correctness of the reduction.
}%

    If $\phi$ is satisfiable, then we embed every variable gadget in the described way that corresponds to its truth assignment in the satisfying assignment. For every containment gadget, we embed it on the variable side if the truth assignment of the variable matches the literal in the clause and on the clause side otherwise. Since every clause in $\phi$ is satisfied, at least on of the possible embeddings for the clause gadget will be uncrossed, allowing us to embed these edges. We conclude that $G$ can be embedded in a way that is admissible with $L$.

    Conversely assume $G$ can be embedded in a way that is admissible with $L$. We deduce for every variable in $\phi$ the truth value from the embedding of the corresponding variable gadget. Since every clause gadget has been embedded, we conclude that at least one possible embedding for every clause gadget exists which corresponds to the corresponding clauses being satisfied in $\phi$ by the construction of the containment gadgets.

    \proofsubparagraph{Modification for Trees.} 
    The following easy modification shows that the same proof strategy also works for bi-labeled trees in convex point sets. To $E(G)$ we add an edge  from the vertex with label set $\{A\}$ to every vertex with label set $\{x_i^{a}\}$ in the variable gadget and the vertex with labels $\{C_j^{a},C_j^{a'}\}$ in the clause gadget. For the vertices with label $\{xc_{i,j}^{a},xc_{i,j}^{a'}\}$ or $\{\lnot xc_{i,j}^{a},\lnot xc_{i,j}^{a'}\}$ in the containment gadget, we identify the vertex with these labels with the vertex with label $\{A\}$ meaning that they are the same vertex. The new, combined vertex has label set $L=\{A\}$. %
That way, $G$ is a connected graph. The additional and modified edges will, once embedded, not cross one another, since they are all incident to the same vertex. Additionally, in any admissible embedding of a bi-labeled matching, we can add all the additional edges without creating any crossings. Consequently, we can admissibly embed the constructed tree if and only if we can admissible embed the matching. 
This completes the proof.
\end{proof}

\NewText{We illustrate a full instance in \cref{fig:sub1} and a solution in \cref{fig:sub2}. The modification of the instance for trees can be seen in \cref{fig:sub3} and the solution in \cref{fig:sub4}.}%

\begin{figure}
    \centering

    \begin{subfigure}[b]{0.42\textwidth}
        \centering
        \includegraphics[page =2, width=\linewidth]{Full_instance_matching_tree}
        \caption{}%
        \label{fig:sub1}
    \end{subfigure}
    \hfill
    \begin{subfigure}[b]{0.42\textwidth}
        \centering
        \includegraphics[page = 3, width=\linewidth]{Full_instance_matching_tree}
        \caption{}%
        \label{fig:sub2}
    \end{subfigure}

    \vspace{0.5cm}

    \begin{subfigure}[b]{0.42\textwidth}
        \centering
        \includegraphics[page = 4, width=\linewidth]{Full_instance_matching_tree}
        \caption{}%
        \label{fig:sub3}
    \end{subfigure}
    \hfill
    \begin{subfigure}[b]{0.42\textwidth}
        \centering
        \includegraphics[page = 5, width=\linewidth]{Full_instance_matching_tree}
        \caption{}%
        \label{fig:sub4}
    \end{subfigure}

    \caption{%
    \textbf{\textsf{(a)}} Instance of \ConvexListPointEmbedding with matchings resulting from $(\lnot x_1 \lor x_2 \lor x_3)\land (x_1 \lor \lnot x_2 \lor x_3)$. \textbf{\textsf{(b)}} A solution to \ConvexListPointEmbedding with matchings resulting from the truth assignment $x_1=\textsc{True}$, $x_2=\textsc{False}$, $x_3=\textsc{True}$. 
    \textbf{\textsf{(c)}} The instance from \cref{fig:sub1} modified for trees. \textbf{\textsf{(d)}} The solution from \cref{fig:sub2} modified for trees.}
    \label{fig:all}
\end{figure}
}

\subsection{Efficient Algorithms for Connected Graphs}
\label{sec:convex-embedded}
In this section, we show that \ConvexListPointEmbedding can be solved in polynomial time for connected graphs $G$ with a given \emph{combinatorial embedding}, an equivalence class of planar drawings of~$G$ that have the same rotation system, i.e., counter-clockwise cyclic order
of incident edges around each vertex, and the same outer face.
Afterwards we remove the assumption on the combinatorial embedding and turn the algorithm into an \FPT-algorithm in the maximum degree of $G$.
Our algorithm heavily relies on the block-cut-tree:
A \emph{block} of $G$ is a maximal connected subgraph without a cutvertex, i.e., every block of~$G$ is either a maximal two-connected subgraph or a bridge (i.e., separating edge) otherwise.
The \emph{block-cut-tree} (or \emph{BC-tree}) $\mathcal T$ of $G$ is the unique tree
whose nodes are the blocks and cutvertices of $G$ and each
edge connects a cutvertex node to a block node so that each node of a cutvertex is
adjacent to exactly all those nodes of blocks that the cut vertex is contained in $G$; see~\cref{fig:bctree}.
Consider $\mathcal T$ to be rooted at some block node.
Given a node $\nu$ of $\mathcal T$, the \emph{pertinent graph} of $\nu$ is the
subgraph of $G$ induced by the edges in all blocks contained in the subtree of $\mathcal T$
rooted at $\nu$.
For convenience, we also define a pertinent graph for each vertex of $G$:
for each cutvertex $v$ of $G$, the \emph{pertinent graph} of $v$ is the pertinent graph of the node of $\mathcal T$ corresponding to $v$ and
for each non-cutvertex $v$ of $G$, we define its \emph{pertinent graph} to be the graph $(\{v\},\emptyset)$, which contains only $v$.
We are now ready to state the result that we want to prove.

\begin{restatable}\restateref{thm:convexDP}{theorem}{theoremConvexDP}
\label{thm:convexDP}The following hold:
    \begin{itemize}
        \item There is an algorithm \Ai that, given an instance $\Instance = \InstanceLong$
        of \ConvexListPointEmbedding where $G$ is
        connected and equipped with a combinatorial embedding~$\mathcal G$, computes an $L$-realization
        of $G$ that respects $\mathcal G$ (or correctly determines that such a
        realization does not exist) in $\mathcal O(n\cdot \ell\cdot s)$ time and $\mathcal O(n\cdot s)$ space; and
        \item there is an algorithm \Aii that, given an instance $\Instance = \InstanceLong$
        of \ConvexListPointEmbedding where $G$ is
        connected, computes an $L$-realization of $G$ (or correctly determines that such a
        realization does not exist) in $\mathcal O(\Delta!\cdot n\cdot \ell\cdot s)$
        time (which is \FPT\ in $\Delta$) and $\mathcal O(n\cdot s)$ space.
    \end{itemize}
\end{restatable}

\begin{figure}
    \centering
    \includegraphics{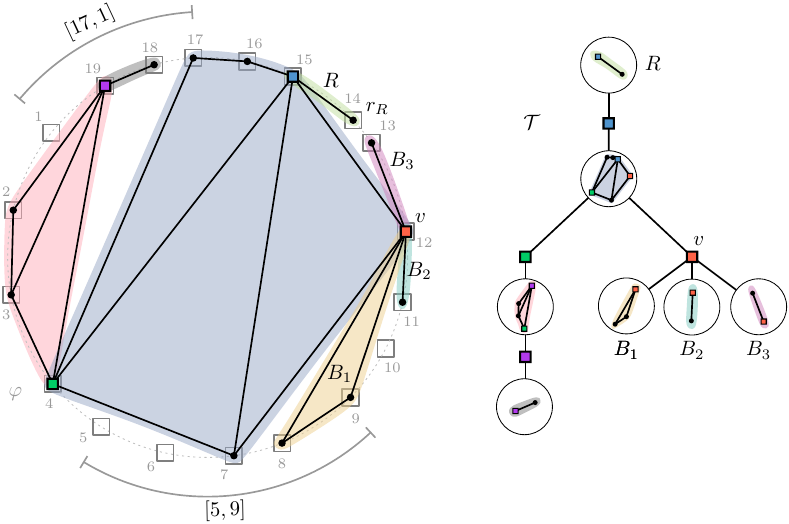}
    \caption{A convex (outerplanar) realization together with the BC-tree of the correpsonding graph; shaded regions correspond to blocks.
    The induced child block order of $v$ is $B_1, B_2, B_3$.
    }
    \label{fig:bctree}
\end{figure}

To prepare for the proof of \cref{thm:convexDP}, we first observe that solutions
obtained by the claimed algorithms are necessarily outerplanar and proceed by stating
some simple but useful properties of %
combinatorial embeddings \NewText{of such graphs}, some of
which are illustrated in~\cref{fig:bctree}.

\begin{observation}
\label{obs:solutionOuterplanar}
    Every realization constructed by \Ai or \Aii is outerplanar.\qed
\end{observation}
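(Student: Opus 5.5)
The observation is that every realization output by \Ai or \Aii is outerplanar. I would prove it by a direct geometric argument that is independent of how the algorithms work. Both algorithms solve \ConvexListPointEmbedding, so the only property I need is this: each output is a planar straight-line drawing $\Drawing$ of $G$ in which every vertex is mapped to a distinct point of $S$, and $S$ is in convex position. The goal is then to show that every vertex of $G$ lies on the boundary of the outer face of $\Drawing$. That is exactly the definition of an outerplanar drawing.

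The key steps are as follows.
\begin{itemize}
\item First, I note that the image $\Drawing(V(G))$ is a subset of $S$. A subset of a point set in convex position is again in convex position, so every vertex is placed at a vertex of the convex hull $H$ of $\Drawing(V(G))$.
\item Second, every edge is a straight segment between two points of the convex set $H$, so the whole drawing lies inside $H$. Hence the unbounded region $\mathbb{R}^2 \setminus H$ is contained in the outer face of $\Drawing$.
\item Third, fix a vertex $v$ with $p = \Drawing(v)$. Since $p$ is an extreme point of $H$, there is a line $\lambda$ through $p$ with $H \setminus \{p\}$ strictly on one side of $\lambda$. Every point near $p$ on the other side of $\lambda$ lies outside $H$, and therefore in the outer face. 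So $p$ is in the closure of the outer face, that is, on its boundary.
\end{itemize}
Since $v$ was arbitrary, all vertices are incident to the outer face, and $\Drawing$ is an outerplanar drawing.

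I do not expect any real obstacle. The only point needing care is degeneracy: the chosen points may be collinear, for example when $G$ has at most two vertices. I would handle this by interpreting ``convex position'' in the standard sense that no point lies in the convex hull of the others. Then each chosen point is an extreme point of $H$ even when $H$ is a segment, and the supporting-line argument goes through unchanged.
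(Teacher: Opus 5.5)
Your argument is correct and is the reasoning the paper leaves implicit: it states the observation with \qed and no proof, because every vertex lies on a point of a convex set and is therefore on the boundary of the outer face. Your proof in fact shows that every $L$-realization on a convex point set is outerplanar, regardless of which algorithm produced it, and this is the form needed to justify rejecting non-outerplanar instances in the preprocessing step.
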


\begin{observation}
\label{obs:augmentedcycle}
    In every outerplanar combinatorial embedding of a $2$-connected graph, the outer face is bounded by a simple cycle and all remaining edges are chords of this cycle.\qed
\end{observation}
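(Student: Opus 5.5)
We argue directly from the definitions together with the standard fact about faces of $2$-connected plane graphs. Let $G$ be $2$-connected and fix an outerplanar combinatorial embedding, i.e., one in which every vertex is incident to the outer face $f_o$. We treat the degenerate case first: if $G$ has only two vertices (the single-edge blocks mentioned above), then the statement is vacuous or trivial. So we may assume $\Size{V(G)} \geq 3$, and hence $G$ contains a cycle.

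\emph{Step 1: the outer face is bounded by a simple cycle.} We invoke the classical fact that in any planar embedding of a $2$-connected graph with at least three vertices, every face is bounded by a simple cycle~\cite{Die.GT4.2012}. If we want this self-contained, we argue by contradiction. Suppose the boundary walk of $f_o$ visits some vertex $v$ twice. Then we can draw a simple closed curve $\gamma$ that meets the drawing only in $v$, running through $f_o$ from one occurrence of $v$ on the walk to the other. This curve $\gamma$ has vertices of $G$ strictly on both of its sides, namely those on the two portions of the boundary walk. Every path between these two groups must cross $\gamma$, and the only point where it can do so is $v$. Hence $v$ is a cutvertex, which contradicts $2$-connectivity. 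Consequently, the boundary of $f_o$ is a closed walk without repeated vertices, that is, a simple cycle $C$. It cannot be a tree-like walk, because $G$ contains a cycle and has no cutvertex.

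\emph{Step 2: all remaining edges are chords.} Since the embedding is outerplanar, every vertex of $G$ lies on the boundary of $f_o$, and therefore on $C$. So $C$ is a Hamiltonian cycle of $G$. Any edge $e \notin E(C)$ has both endpoints on $C$, and so $e$ is a chord of $C$ by definition. This completes the argument.

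The only step with real content is Step 1, namely that a repeated vertex on the outer boundary walk is a cutvertex. The main obstacle is making the separating closed curve through $f_o$ rigorous. I would simply cite the standard face-boundary lemma for $2$-connected plane graphs rather than redo the Jordan-curve argument.
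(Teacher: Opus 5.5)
Your argument is correct and is the standard reasoning that the paper leaves implicit (it states the observation without proof): in a $2$-connected plane graph with at least three vertices, every face, in particular the outer one, is bounded by a simple cycle, and outerplanarity puts every vertex on that cycle, so every remaining edge is a chord. One small precision: the two-vertex case is vacuous, not trivial, because $K_2$ is not $2$-connected under the usual convention (its outer boundary walk is not a cycle). This matches the paper, which applies the observation only to blocks that are not bridges.
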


\begin{corollary}
\label{cor:uniqueembedding}
    Outerplanar combinatorial embeddings of $2$-connected graphs are unique up to reflection.\qed
\end{corollary}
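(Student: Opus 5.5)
We have to prove \cref{cor:uniqueembedding}: outerplanar combinatorial embeddings of a $2$-connected graph $G$ are unique up to reflection. The plan is to reduce the claim to the uniqueness of the boundary cycle of the outer face. Once that cycle is fixed, \cref{obs:augmentedcycle} says every other edge is a chord drawn inside it, and the rotation system is then determined up to the choice of orientation.

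\emph{Step 1: the outer cycle is determined by $G$.} Fix an outerplanar combinatorial embedding $\mathcal G$ of $G$. By \cref{obs:augmentedcycle}, the outer face of $\mathcal G$ is bounded by a simple cycle $C$ and all other edges are chords of $C$. Since $C$ is simple and bounds the outer face, every vertex of $G$ lies on it, so $C$ is a Hamiltonian cycle of $G$. We claim $C$ is the only Hamiltonian cycle of $G$. Suppose $C' \neq C$ is another one. Then $C'$ uses some chord $e=\{u,v\}$ of $C$. Removing $u$ and $v$ from $C$ leaves two nonempty arcs $A_1$ and $A_2$; they are nonempty because $e$ is a chord and hence $u,v$ are not consecutive on $C$. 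The path $C'-e$ visits every vertex and avoids $u$ and $v$ in its interior, so some edge of $C'$ joins a vertex of $A_1$ to a vertex of $A_2$. That edge is a chord of $C$ which crosses $e$, because its endpoints interleave with $u,v$ around $C$. This contradicts planarity: both chords would have to be drawn on the same (inner) side of $C$. Hence $C$ is unique, and it is the same for every outerplanar embedding of $G$.

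\emph{Step 2: the rotation system is determined up to orientation.} Orient $C$ in one of its two directions. At each vertex $v$, the two cycle edges of $v$ are consecutive in the rotation at $v$, with the outer face lying between them. All chords at $v$ lie inside $C$. Their counterclockwise order at $v$ must be the order of their other endpoints along $C$, traversed from one cycle-neighbor of $v$ to the other; any other order would force two chords at $v$ to cross. So once the orientation of $C$ is chosen, the rotation at every vertex is forced, and the outer face is the face bounded by $C$. The two possible orientations of $C$ yield mirror-image rotation systems, which gives uniqueness up to reflection.

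The main obstacle is the crossing argument in Step~1. It requires that two interleaving chords on the same side of $C$ must cross, which is a Jordan-curve argument. I would either state this as standard or derive it directly from the cyclic orders in the combinatorial embedding.
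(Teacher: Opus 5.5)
Your proof is correct and follows the reasoning the paper leaves implicit: the paper states this corollary without proof, as an immediate consequence of Observation~\ref{obs:augmentedcycle}. Your two steps supply exactly the details that consequence relies on, namely that the outer cycle is the unique Hamiltonian cycle (a second one would force two interleaving chords inside it), and that the orientation of this cycle then fixes the chord order at every vertex. The Jordan-curve fact you flag as the main obstacle is standard and can simply be cited.
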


\begin{observation}
\label{obs:cutvertices}
Let $G$ be a connected graph, let $v$ be a cutvertex of $G$, let $B$ be a block
incident to $v$, and let $\mathcal G$ be an outerplanar combinatorial embedding of $G$.
In the cyclic order of incident edges of $v$ in $\mathcal G$, the edges that belong to
$B$ are consecutive.\qed
\end{observation}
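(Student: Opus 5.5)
Since the statement concerns an outerplanar combinatorial embedding $\mathcal G$ of a connected graph $G$, I would argue directly from the definition of a combinatorial embedding as a rotation system together with an outer face. I would also use the block structure: every edge of $G$ belongs to exactly one block, and a cutvertex $v$ lies on the outer face. The claim is that, in the rotation at $v$, the edges of any fixed block $B$ containing $v$ form a contiguous interval.

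I would argue by contradiction, using a planar drawing $\Gamma$ realizing $\mathcal G$ in which all vertices lie on the outer face. Suppose the edges of $B$ at $v$ are not consecutive. Then there are edges $e_1=vu_1$ and $e_2=vu_2$ in $B$, and edges $f_1=vw_1$ and $f_2=vw_2$ not in $B$, whose cyclic order around $v$ is $e_1,f_1,e_2,f_2$. The edge $f_i$ lies in a block $B_i\neq B$, and $B_1$ may equal $B_2$. Since $u_1$ and $u_2$ lie in the block $B$, which is $2$-connected or a single edge (here it has at least two edges at $v$, so it is $2$-connected), there is a path $P$ from $u_1$ to $u_2$ in $B$ that avoids $v$. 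The closed curve $C=e_1\cup P\cup e_2$ is a simple cycle through $v$. Near $v$ it separates the wedge containing $f_1$ from the wedge containing $f_2$, so $f_1$ and $f_2$ start on opposite sides of $C$.

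Next, consider the component $K_i$ of $G-v$ that contains $w_i$. Because $B_i\neq B$ and $v$ separates the blocks, $K_i$ is disjoint from $V(B)\setminus\{v\}$, so $K_i$ does not meet $C$ except possibly at $v$. Since $K_i$ avoids $v$, it lies entirely on one side of $C$. Hence one of $K_1,K_2$, say $K_j$, lies in the bounded interior of $C$. The case $K_1=K_2$ is impossible, because $K_1$ would then touch both sides of $C$, which contradicts planarity. The interior of $C$ is a union of inner faces, so the vertex $w_j$ cannot lie on the outer face, contradicting outerplanarity. The subtle step is to justify that the side of $C$ containing $f_j$ near $v$ is the bounded one for at least one $j$. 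This follows because the outer face touches $v$ in only one of the two wedges at $v$ determined by $C$. If the argument at $v$ is cleaner when stated via wedges, I would phrase it as follows: the outer face meets $v$ in a single angle, so all of $B$'s edges lie on one side of that angle.
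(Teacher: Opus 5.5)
Your proof is correct. The paper states this observation with a bare \qed\ and no argument, and yours is the standard justification left implicit there: the cycle $C=e_1\cup P\cup e_2$ with $P\subseteq B-v$ forces the interleaving edges $f_1,f_2$ onto opposite sides of $C$, so some $w_j\notin V(B)$ lies strictly inside $C$ and hence off the outer face.

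One caution: drop your closing ``cleaner'' rephrasing, because read literally it is false. In an outerplanar embedding the outer face meets a cutvertex in at least two angles; for example, at the middle vertex of a path on three vertices both angles belong to the outer face.

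The fact you actually need is one you already use. The two sectors at $v$ cut out by $e_1$ and $e_2$ lie in different regions of $\mathbb{R}^2\setminus C$, because every point of a Jordan curve lies on the boundary of both regions. So the step you flag as ``subtle'' is already settled by that observation.
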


Let $\mathcal G$ be an outerplanar combinatorial embedding of a connected graph
$G$, let $v$ be a cutvertex of $G$, let $\mathcal T$ \NewText{be} the BC-tree of $G$, which we
consider to be rooted at some block node, and let $B$ be the block that corresponds to
the parent of $v$ in $\mathcal T$.
In view of \Cref{obs:cutvertices}, in $\mathcal G$ the blocks of $v$ appear around $v$
in a well-defined counterclockwise cyclic order $B, B_1,B_2, \dots, B_k$.
We say that $B_1,B_2, \dots, B_k$ is the \emph{child block order} of $v$ \emph{induced} by $(\mathcal T, \mathcal G)$.

We are now ready to describe the proof of \Cref{thm:convexDP}.

\begin{proof}[Proof of \Cref{thm:convexDP}]
\Ai and \Aii are very similar; essentially \Aii corresponds to \Ai, but some of its steps
are repeated for different ways to choose the combinatorial embedding.
Naturally, this means that the correctness and runtime arguments for both algorithms
are also very similar.
For this reason, we discuss both algorithms within the same theorem/proof environment. 
We use the following notation to clearly mark the few situations where the (analysis of
the) algorithms differ: \algs{Description that only applies to \Ai}{Description that only
applies to \Aii}.
Readers that are specifically interested in the discussion about \Ai
(\Aii) can therefore simply skip over
the second (first) entries of these text constructs.

We begin by introducing some notation.
    Let $\varphi = (x_0, x_1, \dots, x_{s-1}, x_0)$ be the (given) circular ordering of the
    convex set $S$ obtained by visiting its points in counterclockwise fashion.
    For brevity, we identify each point $x_i$ with its index $i$.
    For $p, q \in \{0, \dots, s-1\}$, we use $[p, q]$ to denote the linear subsequence of $\varphi$
    that starts with $p$ and ends with $q$.
    As a convention, calculations involving these indices are performed modulo $s$; e.g., $p+1$
    denotes the counterclockwise successor of $p$ in $\varphi$ and $[p,p-1]$
    denotes the linearization of $\varphi$ that starts at $p$; see \cref{fig:bctree} for examples.

In this paragraph, we discuss the \emph{preprocessing step} of the algorithms, which
starts by computing the BC-tree $\mathcal T$ of $G$, which can be done in $\mathcal O(n)$
time~\cite{HT.EAG.1973}.
\algs{
\Ai then tests if the given combinatorial embedding $\mathcal G$ is outerplanar, which is easy to do
in $\mathcal O(n)$ time; if $\mathcal G$ is not outerplanar, we report that there is
no solution (which is correct due to \cref{obs:solutionOuterplanar}).}{
\Aii then attempts to construct an outerplanar combinatorial embedding $\mathcal G_B$
for each block $B$, which can be done in $\mathcal O(n)$ total time~\cite{Mit.LAR.1979};
if some block does not admit such an embedding, we report that
there is no solution (which is correct due to \cref{obs:solutionOuterplanar}).}
The overall runtime of the preprocessing step is thus $\mathcal O(n)$.

    We then employ a dynamic programming strategy that, intuitively speaking,
    computes a ``shortest possible'' realization of our graph by greedily combining
    recursively computed ``shortest possible'' realizations of suitably defined subgraphs.
    As typical with dynamic programming, we first describe an algorithms that only decides whether a solution exists; in the end, we make it constructive by employing
    the usual back-linking strategy.
    
    To set up the necessary notation, we consider $\mathcal T$ to be rooted at a block
    node of degree at most $1$ and use $R$ to denote the corresponding block.
    Let $r_R$ be a vertex of $R$ that is not a cutvertex of $G$.
    For all blocks $B \neq R$, we define $r_B$ to be the cutvertex of $G$ that
    corresponds to the parent of $B$ in $\mathcal T$.
    For each block $B$ (including $R$) and each %
    \NewText{point} $p \in S$, we define $T[B, p]$ to be the first
    \NewText{point} $q \in [p,p-2]$ such that the pertinent graph of $B$ admits an $L$-realization $\Gamma$ 
    \algs{that respects $\mathcal G$ %
    and}{%
    \textbf{-}}
    where all vertices except for $r_B$ are placed in $[p, q]$ and $r_B$ is placed outside
    of $[p, q]$ (i.e., in $[q+1,p-1]$); in case no such %
    \NewText{point} $q$ exists, we define $T[B, p] = \infty$.
    If $T[B,p]$ is finite, we say that $\Gamma$ (and any other realizations with the same properties) is a realization \emph{corresponding} to $T[B,p]$.
    It is easy to observe that the exact position of $r_B$ is irrelevant -- its valid
    positions are freely interchangable:
    
    \begin{ob}
    \label{cl:root-movable}
        Let $\Gamma$ be a realization corresponding to some entry $T[B,p]=q$.
        Moving $r_B$ from its %
        \NewText{placement} in $\Gamma$ to some other %
        \NewText{point from} $[q+1,p-1]$ that belongs to $L(r_B)$ results in another realization corresponding to
        $T[B,p]$.\claimqedhere
    \end{ob}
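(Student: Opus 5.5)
The plan is to verify the two properties a realization corresponding to $T[B,p]=q$ must have, after moving $r_B$ from its position $x\in[q+1,p-1]$ to another admissible point $y\in[q+1,p-1]\cap L(r_B)$. Call the new drawing $\Gamma'$. All other vertices keep their positions in $[p,q]$, so the placement conditions in the definition of $T[B,p]$ are immediate. It therefore suffices to show that $\Gamma'$ is still a planar straight-line drawing on $S$. In the case of \Ai, I also need that it still respects $\mathcal G$. In all cases it must still be an $L$-realization, and this holds because $y\in L(r_B)$ and no other vertex occupies $y$, since the other vertices all lie in $[p,q]$.

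For planarity I will use the standard characterization of crossings for convex point sets. Two segments with four distinct endpoints on a convex point set cross exactly when their endpoints alternate in the circular order $\varphi$. Segments that share an endpoint do not cross properly, because no three points of a convex set are collinear.

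The only segments that change are the edges incident to $r_B$; every other edge has both endpoints in $[p,q]$. Consider an edge $r_Bu$ and an edge $ab$ not incident to $r_B$, so that $u,a,b\in[p,q]$. If $u\in\{a,b\}$ the two segments share an endpoint. Otherwise, whether $\{x,u\}$ and $\{a,b\}$ alternate depends only on the relative order of $u,a,b$ within the linear interval $[p,q]$. The reason is that $x$ lies in the complementary arc $[q+1,p-1]$, so it behaves as a point lying "beyond both ends" of $[p,q]$. The same holds for $y$. Hence $r_Bu$ crosses $ab$ in $\Gamma$ if and only if it does in $\Gamma'$. Two edges both incident to $r_B$ never cross. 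So $\Gamma'$ is planar.

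For the combinatorial embedding in \Ai, I will use the fact that in a convex drawing the rotation at each vertex $v$ is the order of its neighbours' points along $\varphi$ read starting from $v$. For $r_B$, its neighbours in the pertinent graph are listed in the order they appear in $[p,q]$ no matter where $r_B$ sits in the complementary arc. For every other vertex, the relative circular position of $r_B$ among its neighbours is likewise unchanged, since $r_B$ stays outside $[p,q]$. The outer face is also unchanged, because the boundary of the convex hull visits the vertices in the same circular order.

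The step I expect to need the most care is stating this "equivalent circular position" argument cleanly. For both the crossing test and the rotation system, the precise claim is that replacing $x$ by $y$ preserves the circular order of $\{x\}\cup P$, where $P$ is the set of positions of all other vertices. Once that is stated, both properties follow directly.
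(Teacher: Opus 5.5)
Your proof is correct. The paper states this as an observation with no written proof, justified only by the remark that the valid positions of $r_B$ are ``freely interchangeable'', and your argument is the natural formalization of that remark: $r_B$ is the only vertex outside $[p,q]$, so moving it within $[q+1,p-1]$ preserves the circular order of all vertex positions, and on a convex point set this order determines the crossings (via alternation), the rotation system, and the outer face.
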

    
    Note that there is an $L$-realization of $G$ \algs{that respects $\mathcal G$%
    }{%
    \textbf{-}} if and only if $T[R, p+1]$ is finite for at least one
    \NewText{point} $p \in L(r_R)$.
    Given all entries of $T$, this condition can be checked in $\mathcal O(|L(r_R)|)\subseteq \mathcal O(\ell)$ time,
    which corresponds to the \emph{postprocessing step} of the algorithms.

It remains to discuss how to compute the entries of $T$.
The following algorithmic subroutine will prove to be a valuable tool in this regard.    
    
    \begin{claim}
    \label{cl:children}    
    \label{cl:childrenBody*}
       Let $v$ be a vertex of $G$ and let $p \in S$. 
       We can compute the first %
       \NewText{point} $q \in [p, p-2]$ so that the
       pertinent\footnote{Recall that we defined the pertinent graph of a
       cutvertex to be the pertinent graph of the corresponding node of $\mathcal T$, whereas
       the pertinent graph of a non-cutvertex consists of only that vertex; c.f.~\cref{sec:preliminaries}.} graph $G_v$ of $v$
       admits an $L$-realization $\Gamma$
       in $[p, q]$ \algs{that respects $\mathcal G$%
    }{%
    \textbf{-}} where
       \begin{enumerate}
       \item \label{cl:children:prop1} no edge $ab$ of $G_v$ is realized in a way where $a$
       lies before $v$ and $b$ lies after $v$ in $[p, q]$; and
       \item \label{cl:children:prop2} \algs{%
       if $v$ is a cutvertex of $G$, then,
       for each pair of blocks $B_a,B_b$ of $G_v$ that are incident to $v$ and
       where $B_a$ appears before $B_b$ in the child block order of $v$ induced by $(\mathcal T,\mathcal G)$,
       no vertex of the pertinent graph of $B_b$ that is not $v$
       appears before some vertex of the pertinent graph of $B_a$ that is not $v$}{%
       \textbf{-}}
       \end{enumerate}
        (or return $\infty$ in case such %
        \NewText{point} $q$ does not exist) in \algs{$\mathcal O(|L(v)|\cdot j)$%
        }{$\mathcal O(j!\cdot |L(v)|\cdot j)$%
        } time, where $j$ is the number of children of $v$ in $\mathcal T$ if $v$ is a cutvertex and $j=1$ otherwise, assuming, in case $v$ is a cutvertex, 
        constant-time access to all entries $T[B', p']$ for all $p' \in S$ and all blocks $B'$ that
        are children of $v$ in $\mathcal T$.
    \end{claim}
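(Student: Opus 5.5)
The plan is a greedy sweep that is repeated for each admissible position of $v$. If $v$ is not a cutvertex, then $G_v$ consists of $v$ alone. Here the answer is simply the first point $q\in[p,p-2]$ with $q\in L(v)$, found by scanning $L(v)$ in $\mathcal O(|L(v)|)$ time after comparing the offsets $q-p \bmod s$.

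Now suppose $v$ is a cutvertex with children $B_1,\dots,B_j$ in $\mathcal T$. In \Ai these are taken in the child block order induced by $(\mathcal T,\mathcal G)$. In \Aii we enumerate all $j!$ orders and keep the best result. Property~\ref{cl:children:prop1} forces a specific shape: in the realization restricted to $[p,q]$, each child pertinent graph $G_{B_i}$ minus $v$ lies entirely before $v$ or entirely after $v$. Otherwise some path in $B_i$ would pass over $v$ through an edge $ab$ with $a$ before and $b$ after $v$. Two-connectivity of $B_i$, or $B_i$ being an edge at $v$, gives such an edge, and for a bridge block the claim is immediate. Property~\ref{cl:children:prop2}, together with outerplanarity (\cref{obs:solutionOuterplanar}, \cref{obs:cutvertices}), says the non-$v$ vertices of the children occupy pairwise disjoint consecutive intervals in the fixed order. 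So, for a fixed order and a fixed position $x\in L(v)$, a solution is a split index $t\in\{0,\dots,j\}$. Blocks $B_1,\dots,B_t$ are placed consecutively in $[p,x-1]$, $v$ sits at $x$, and blocks $B_{t+1},\dots,B_j$ are placed consecutively after $x$.

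The core step is an exchange argument showing that packing each block as early as possible is optimal. Concretely, we place $B_1$ starting at $p$, ending at $q_1=T[B_1,p]$, then $B_2$ starting at $q_1+1$, and so on. By \cref{cl:root-movable} the root $r_{B_i}=v$ may be moved to any admissible point outside the block's interval, in particular to $x$. Disjoint consecutive intervals of outerplanar pieces sharing only $v$ cannot cross, so the union is a realization. It respects $\mathcal G$ in \Ai because the order matches the child block order. Minimality follows by induction: any valid realization placing $B_1,\dots,B_i$ ends at or after our greedy endpoint. The tail greedy realization, which depends on $x$, is obtained by restarting the packing at $x+1$.

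To get the claimed running time we do not recompute for each split. For fixed $x$, the first blocks must satisfy the greedy prefix endpoint $q_t<x$, and prefix endpoints are monotone, so the greedy rule is to put as many blocks as possible before $x$. This can be sharpened by allowing $v$ to be placed at the smallest admissible $x$ after the prefix, or by considering all $x\in L(v)$. For each $x$ we compute the greedy prefix in $\mathcal O(j)$ and then the greedy suffix starting at $x+1$ in $\mathcal O(j)$, giving $\mathcal O(|L(v)|\cdot j)$ overall. In \Aii this is multiplied by $j!$ for the orders. We must also check that everything fits within $[p,p-2]$ without wrapping, and return the minimum $q$ (measured from $p$) over all choices, or $\infty$.

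The main obstacle I expect is justifying that the greedy prefix should pack the maximum number of blocks before $v$. Putting fewer blocks before $v$ does not help, since the blocks after $v$ only get a later start, but this needs the monotonicity of $T[B,\cdot]$ in its start point. I would prove that monotonicity first: a realization in $[p',q]$ with $p'$ later also lies within $[p,q]$. With it in hand, the exchange argument goes through cleanly.
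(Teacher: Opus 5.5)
Your proposal is correct and follows the paper's proof. For each $p_v\in L(v)$ (and, in \Aii, each block permutation), the paper greedily sets $q_{i+1}=T[B_{i+1},q_i+1]$ and, as soon as $p_v$ falls into $[q_i+1,T[B_{i+1},q_i+1]]$, restarts at $T[B_{i+1},p_v+1]$; this is exactly your ``maximal prefix before $v$, then greedy suffix from $x+1$'' rule. One caveat: state the monotonicity you invoke with the root pinned at $x$, because the unrestricted $T[B,\cdot]$ need not be monotone in its start point (an earlier start can swallow the only admissible root point), but that pinned setting is precisely where you use it.
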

    
    \begin{claimproof}
    In case $v$ is not a cutvertex of $G$, the desired %
    \NewText{point} $q$ simply corresponds
    to the first index in $[p,p-2]$ that occurs in $L(v)$ (or $\infty$ if no such %
    \NewText{point} exists).
    This case can thus be handled in $\mathcal O(|L(v)|)$ time.%
    
    So assume that $v$ is a cutvertex of $G$.
    Further, assume that $q$ is finite.
    We say that $\Gamma$ and any other realizations with the same properties is a realization \emph{corresponding} to $q$.%
    \algs{%
    Let $B_1,\dots,B_j$ the child block order of $v$ induced by $(\mathcal T,\mathcal G)$. We will
    describe a strategy that directly computes the desired %
    \NewText{point} $q$.}{%
    Let
    $B_1,\dots,B_j$ be some arbitrary permutation of the blocks that correspond to
    children of $v$ in $\mathcal T$. We will describe a strategy that determines a %
    \NewText{point} that is $q$ %
    assuming there exists a corresponding realization where for each pair of blocks
    $B_a,B_b$ with $1\le a<b\le j$, no vertex of the pertinent graph of $B_b$ that is not
    $v$ appears before some vertex of the pertinent graph of $B_a$ that is not $v$ (this
    condition basically corresponds to Property~\ref{cl:children:prop2} for \Ai, but applied
    to the considered permutation instead of the child block order induced by $(\mathcal T,\mathcal G)$).
    We will repeat our strategy $j!-1$ times -- once for each other permutation
    of $B_1,\dots,B_j$, which incurs a factor of $\mathcal O(j!)$ in the runtime. Among all the thereby determined %
    \NewText{points}, the first that is
    contained in $[p,p-2]$ is the correct %
    \NewText{point}~$q$ since, due to
    \cref{obs:cutvertices}, the described \emph{block ordering property} necessarily has to be
    satisfied for some block permutation in every realization corresponding to $q$.
    So from now on, assume without loss of generality that there is a realization corresponding to $q$ in which the block ordering property is satisfied for $B_1,\dots,B_j$.}

Consider a %
\NewText{point} %
$p_v\in L(v)$ located in
$[p,p-2]$ (such a %
\NewText{point} has to exist since the desired
realization exists by assumption).
We will describe a strategy that determines a %
\NewText{point}
that is~$q$ assuming that there is a corresponding
realization \algs{%
\textbf{-}}{in
which the block ordering property is satisfied for
$B_1,\dots,B_j$ %
and}  where $v$ is placed at
$p_{v}$.
We repeat this strategy for all other %
\NewText{points} in
$L(v)$ that are located in $[p,p-2]$, which incurs a factor of $|L(v)|$ in the runtime.
Among all the thereby obtained %
\NewText{points}, the first that
is contained in $[p,p-2]$ is the correct %
\NewText{point} $q$
(since the desired realization exists by assumptions and
also uses some of these %
\NewText{points} for $v$). %
So from now on, assume without loss of generality that
there is a realization corresponding to $q$  \algs{%
\textbf{-}}{in
which the block ordering property is satisfied for
$B_1,\dots,B_j$ %
and} where $v$ is placed at $p_{v}$.

We define $q_0=p-1$. For $i\in \{1,\dots,j\}$,  we define
$q_i$ to be the first %
\NewText{point} in $[p,p-2]$ such that
there is a realization $\Gamma_i$ corresponding to $q$
\algs{%
\textbf{-}}{in
which the block ordering property is satisfied for
$B_1,\dots,B_j$ %
and} where $v$ is placed at $p_{v}$
and the last %
\NewText{point on which} %
a vertex in the pertinent graph
of $B_i$ that is not $v$ \NewText{is placed} is $q_i$ (intuitively speaking,
$q_i$ is the endpoint of a shortest possible
(sub)realization of the first $i$ blocks in the
considered realizations corresponding to $q$, ignoring the
common parent $v$).
We say that $\Gamma_i$ and every other realization with
the same properties is a realization \emph{corresponding}
to $q_i$.
Observe that, by definition, the first %
\NewText{point} among
$q_j$ and $p_{v}$ in $[p,p-2]$ is the desired value $q$.
It thus remains to describe how to determine $q_j$, which
we will do in an iterative greedy-like fashion.

\begin{figure}
    \centering
    \includegraphics[page=1]{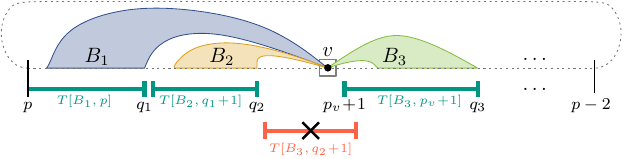}
    \caption{
    The greedy strategy used in the proof of \cref{cl:children}.
    The shaded regions represent the pertinent graphs of $B_1,\dots,B_j$.
    }
    \label{fig:dpClaim}
\end{figure}

Suppose, inductively, we have already obtained the
\NewText{point} $q_i$ for some $i\in \{0,\dots,j-1\}$ (note that
obtaining $q_0$ is trivial as $q_0 = p - 1$ by definition).
To obtain $q_{i+1}$, we proceed as follows (for illustrations, see \cref{fig:dpClaim}).
Since, by assumption, a realization corresponding to $q_i$ exists,
the %
\NewText{point} $T[B_{i+1}, q_i+1]$ is finite and contained in $[q_i+1,p-2]$.
If $p_{v}$ is not contained in $[q_i+1, T[B_{i+1}, q_i+1]]$, then $q_{i+1}=T[B_{i+1}, q_i+1]$ by definition of the involved %
\NewText{points} and \cref{cl:root-movable}.
Otherwise, due to Property~\ref{cl:children:prop1} and by the assumption that a realization corresponding to $q_i$ exists, we have $q_{i+1}=T[B_{i+1},p_{v}+1]$.%

Obtaining $q_{i+1}$ from $q_i$ takes $\mathcal O(1)$ time (recall that we assume constant-time access to all entries $T[B', p']$ for all $p' \in S$ and all blocks $B'$ that
        are children of $v$ in $\mathcal T$).
        Thus, obtaining~$q_j$ takes $\mathcal O(j)$ time (for some fixed %
        \NewText{point for} $v$) \algs{%
        \textbf{-}
        }{(and some fixed permutation of the child blocks of $v$ in $\mathcal T$)%
        }.
Overall, we therefore obtain a total runtime of \algs{$\mathcal O(|L(v)|\cdot j)$}{$\mathcal O(j!\cdot |L(v)|\cdot j)$}.
    \end{claimproof}

    To compute an entry $T[B, p]$, we consider two cases depending on the type of $B$.
    In both cases, we explicitly describe how to realize $B$. Pertinent graphs of children
    of $B$ in $\mathcal T$ will be realized recursively by appealing to \cref{cl:children};
    the base case of the recursion corresponds to the situation where $B$ is a leaf of
    $\mathcal T$, in which case the pertinent graph of $B$ is just $B$.
    To justify the assumption in \cref{cl:children}, we employ the usual memoization strategy; alternatively, one could process the entries in a bottom-up fashion (starting with the entries whose blocks correspond to leaves of the BC-tree).

    \proofsubparagraph{Case 1:}
    $B$ is a single edge $r_Bv$.
    Let $q$ be the result of applying \cref{cl:children} to $v$ and $p$.
    If $q=\infty$ or if $q$ is finite, but $[q+1,p-1]$ contains no %
    \NewText{point from} $L(r_B)$,
    no realization corresponding to $T[B,p]$ is possible, so we set $T[B,p]=\infty$.
    Otherwise, note that, due to Property~(\ref{cl:children:prop1}) from
    \Cref{cl:children}, the realization that is implicitly constructed by \Cref{cl:children} can
    be augmented by the edge $r_Bv$ for some %
    \NewText{placement for} $r_B$ %
    \NewText{on a point from} $[q+1,p-1]$ and
    $L(r_B)$ to obtain a realization corresponding to $T[B,p]=q$, see \cref{fig:dpCases}(a).
    \algs{(Notably, due to Property~(\ref{cl:children:prop2}) from
    \Cref{cl:children}, this augmentation correctly places the block $B$ after the last / before the first block of the child block order of $v$ induced by $\mathcal T,\mathcal G$.)}{\textbf{-}}. %
    Case~1 can thus be handled in \algs{$\mathcal O(|L(v)|\cdot j)$}{$\mathcal O(j!\cdot |L(v)|\cdot j)$} time, where $j$ is the number of children of~$v$ in $\mathcal T$ if $v$ is a cutvertex and $j=1$ otherwise.
    Note that we can express this time bound as
    \algs{$\mathcal O(\sum_{u\in V(B)\setminus \{r_B\}}\ell \cdot (\mathrm{deg}_G(u)))$}{$\mathcal O(\sum_{u\in V(B)\setminus \{r_B\}}(\mathrm{deg}_G(u)-1)!\cdot \ell\cdot (\mathrm{deg}_G(u)))$},
    which matches the running time we will obtain for Case~2.
    \hfill$\diamond$

\begin{figure}
    \centering
    \includegraphics[page=2]{dp}
    \caption{
    \textbf{(a)} Case 1. \textbf{(b)} Case 2. The tuples indicate the inputs for applications of  \cref{cl:children}.
    }
    \label{fig:dpCases}
\end{figure}

    \proofsubparagraph{Case 2:}
    $B$ is not a single edge, i.e., $B$ is a maximal $2$-connected subgraph of $G$.
    \algs{The following strategy directly computes the desired value $T[B,p]$.}{The following strategy describes how to find a %
    \NewText{point} that is $T[B,p]$
    assuming that the corresponding realization uses the precomputed (in the preprocessing step) combinatorial
    embedding $\mathcal G_B$ for $B$.
    We then repeat the strategy once more to obtain a %
    \NewText{point} that is $T[B,p]$
    assuming that the corresponding realization uses the reflection of $\mathcal G_B$ for $B$, which incurs a (vanishing) factor of $2\in \mathcal O(1)$ in the runtime.
    Due to \cref{cor:uniqueembedding}, the correct value of $T[B,p]$ then is the first
    \NewText{point from} $[p,p - 2]$ corresponding to one of the two obtained %
    \NewText{points} (or $\infty$
    if both of them are $\infty$).
    So from now on, assume without loss of generality that, if there is a realization corresponding to $T[B,p]$, then there also is a realization corresponding to $T[B,p]$ that uses the embedding $\mathcal G_B$ for $B$.}
    
    In view of \Cref{obs:augmentedcycle}, 
    \algs{the combinatorial embedding of $B$ in $\mathcal G$}{the
    precomputed embedding $\mathcal G_B$} is a counterclockwise cycle
    $C = (r_B = v_1, v_2, \dots, v_j, v_1)$, possibly augmented with some interior chords $I$.
    Note that in any $L$-realization of $G_B - I$, where $G_B$ is the pertinent graph of 
$B$, the cycle $C$ has to be represented as a convex polygon.
    By convexity, the chords $I$ can be inserted into any such realization, therefore, we 
will assume without loss of generality that $I = \emptyset$.
    
    For $i \in \{1, \dots, j\}$, let $H_i$ denote the graph consisting of the path
    $v_1, \dots, v_i$, together with the union of the pertinent graphs of $v_2, \dots, v_i$.
    We define $q_1 = p-1$.
    For $i \in \{2, \dots, j\}$, we define $q_i$ to be the first %
    \NewText{point} in $[p, p-2]$ such 
that $H_i$ has an $L$-realization $\Gamma_i$ \algs{that respects $\mathcal G$}{where the embedding of $B$ is $\mathcal G_B$} and where (i) 
all vertices except for $r_B$ ($=v_1$) are placed in $[p, q_i]$, (ii) $r_B$ is placed in
    $[q_i+1,p-1]$, (iii) among the vertices $v_2, \dots, v_i$, the vertex $v_i$ is the last in
    $[p, q_i]$ and (iv) there is no edge $ab$ of $H_i$ where $a$ lies before $v_i$ and $b$
    lies after $v_i$ in $[p, q_i]$; if no such $q_i$ exists, we define $q_i = \infty$.
    If $q_i$ is finite, we say that $\Gamma_i$ and every other realization with the same properties is an $L$-realization 
\emph{corresponding} to $q_i$. 
        
    Suppose, inductively, we have already determined $q_i$, $i \in \{1, \dots, j-1\}$ 
(note that obtaining~$q_1$ is trivial as $q_1 = p - 1$ by definition).
    To obtain $q_{i+1}$, we proceed as follows.
    If $q_i=p-2$, we set $q_{i+1}=\infty$ since in this case the interval $[p,p-2]$ is 
already fully occupied by every realization of $H_i$ corresponding to $q_i$ and there is
    no way to place $v_{i+1}$ anymore.
    So assume that $q_i\in [p,p-3]$.
    Let $q'$ be the result of applying \Cref{cl:children} to $v_{i+1}$ and $q_i + 1$.
    If $q'=\infty$ or if $q'$ is finite but does not lie in $[q_i+1,p-2]$ or if $q'$ is finite
    and lies in $[q_i+1,p-2]$ but $[q'+1,p-1]$ contains no %
    \NewText{point} in $L(r_B)$ (=$L(v_1)$),
    we set $q_{i+1}=\infty$ as in this case the desired realization of $H_{i+1}$ does 
not exist.
    So assume otherwise, i.e., $q'$ is finite, lies in $[q_i+1,p-2]$, and $[q'+1,p-1]$ 
contains a %
\NewText{point from} $L(r_B)$ (=$L(v_1)$).
    Let $\Gamma_i$ be an $L$-realization of $H_i$ corresponding to~$q_i$ where 
$r_b$ ($=v_1$) lies in $[q'+1,p-1]$ (which exists by \cref{cl:root-movable}).
    Let $\Gamma'$ be the $L$-realization of the pertinent graph of $v_{i+1}$ that is 
(implicitly) constructed by \Cref{cl:children}.
    Note that $\Gamma_i$ and $\Gamma'$ do not intersect by definition.
    Further, by Property~(\ref{cl:children:prop1}) from \Cref{cl:children} and 
Property~(iv) from the previous paragraph, we can add the edge $v_iv_{i+1}$ to the union of $\Gamma_i$ and $\Gamma'$
    without introducing crossings, thereby creating an $L$-realization of $H_{i+1}$
    \algs{that respects $\mathcal G$ (notably, due to Property~(\ref{cl:children:prop2}) from
    \Cref{cl:children}, this construction correctly places the edge $v_iv_{i+1}$ of $B$ after the last / before the first block of the child block order of $v_{i+1}$ induced by $\mathcal T,\mathcal G$.)}{\textbf{-}}, see \cref{fig:dpCases}(b).
    It is easy to see that this realization fullfills Properties~(i)-(iv).%
    Further, a simple exchange argument (note that we have employed a greedy-like strategy) shows that $q'$ is the first %
    \NewText{point} in $[p,p-2]$
    for which such a realization exists and, thus, we obtain $q_{i+1}=q'$.
In the end, if $q_j$ is finite, we can add the final edge $v_jv_1$ of $C$ to the
    realization of $H_j$ corresponding to $q_j$ without introducing crossing due to
    Property~(iv) and, therefore, we obtain $T[B,p]=q_j$.

The runtime for obtaining $q_{i+1}$ from $q_i$ is dominated by the application of \Cref{cl:children}, which takes
\algs{$\mathcal O(|L(v_{i+1})|\cdot j_{v_{i+1}})$}{$\mathcal O(j_{v_{i+1}}!\cdot |L(v_{i+1})|\cdot j_{v_{i+1}})$} time, where $j_{v_{i+1}}$ is the number of children of $v_{i+1}$ in $\mathcal T$ if $v_{i+1}$ is a cutvertex and $j_{v_{i+1}}=1$ otherwise.
The total time for obtaining $q_j$ can thus be bounded by the same expression we obtained obtained for Case~1:  \algs{$\mathcal O(\sum_{u\in V(B)\setminus \{r_B\}}\ell \cdot (\mathrm{deg}_G(u)))$}{$\mathcal O(\sum_{u\in V(B)\setminus \{r_B\}}(\mathrm{deg}_G(u)-1)!\cdot \ell\cdot (\mathrm{deg}_G(u)))$}.
    \hfill$\diamond$\\
    
    Note that for two distinct blocks $B,B'$, we have $V(B)\setminus \{r_B\}\cap V(B')\setminus \{r_B'\}=\emptyset$.
    We obtain that for some fixed %
    \NewText{point} $p\in S$, the total time to obtained the %
    \NewText{points} $T[B,p]$ for all blocks $B$ is \algs{$\mathcal O(\sum_B\sum_{u\in V(B)\setminus \{r_B\}}\ell \cdot (\mathrm{deg}_G(u)))\subseteq \mathcal O(n \cdot \ell)$ (by the handshaking lemma)}{$\mathcal O(\sum_B\sum_{u\in V(B)\setminus \{r_B\}}(\mathrm{deg}_G(u)-1)!\cdot \ell\cdot (\mathrm{deg}_G(u)))\subseteq \mathcal O(\Delta!\cdot n\cdot \ell)$}.
    By multiplying with the number $s$ of %
    \NewText{points in} $S$, we obtain a total runtime of
    \algs{$\mathcal O(n\cdot \ell\cdot s)$}{$\mathcal O(\Delta!\cdot n\cdot \ell\cdot s)$} for computing all entries of $T$, which dominates the time bounds of $\mathcal O(n)$ and $\mathcal O(\ell)$ for the preprocessing and postprocessing step, respectively.
    Table $T$ has $\mathcal O(n\cdot s)$ entries.
    For the back-linking strategy (to make the algorithm constructive), for each entry $T[B,p]=q$, we also  store, for each $u\in V(B)\setminus\{r_B\}$, the %
    \NewText{point for} $u$ \algs{\textbf{-}}{and the permutation of the child blocks of $u$} as used in the realization corresponding to $q$.
    For some fixed %
    \NewText{point} $p\in S$, the total size of the back-links for all blocks $B$ is
    \algs{
    $\mathcal O(\sum_B\sum_{u\in V(B)\setminus \{r_B\}}1)\subseteq \mathcal O(n)$}
    {
$\mathcal O(\sum_B\sum_{u\in V(B)\setminus \{r_B\}}\mathrm{deg}_G(u))\subseteq \mathcal O(n)$ (by the handshaking lemma)
    }.
    By multiplying with the number $s$ of %
    \NewText{points in} $S$, we obtain the claimed total space consumption of $\mathcal O(n\cdot s)$.
    \qed
\end{proof}

\section{Embeddings on General Point Sets}
\label{sec:general}
\Cref{thm:convexDP} heavily relies on the fact that the points lie in convex position.
In this section, we drop the convexity requirement and study \ListPointEmbedding in its full generality.
We first show in \Cref{sec:general-hardness-path} that \ListPointEmbedding becomes \NP-complete even when restricted to bi-labeled paths or cycles. %
We then show in \Cref{sec:general-vc} that \ListPointEmbedding is fixed parameter tractable in $\vcn(G)$ for bi-labeled graphs and justify our restriction to bi-labeled graphs in \Cref{sec:parameterized-algorithms-star,sec:tri-labeled-vertex-cover-2}. %

\subsection{Embedding Bi-Labeled Paths and Cycles Becomes Hard}
\label{sec:general-hardness-path}

We first show that \ListPointEmbedding is \NP-complete even for bi-labeled paths and cycles.
To this end, we reduce from the \NP-complete problem \textsc{Planar Monotone 3-SAT}~\cite{BergKhosravi2010}.
In this variant of \textsc{3SAT}, the input consists of (i) a boolean formula $\phi$ where each clause contains only positive (i.e., un-negated) or only negative (i.e., negated) literals and (ii) a rectilinear representation of the incidence graph of $\phi$ where all variables lie on a horizontal line $h$, all positive clauses lie above $h$, and all negative clauses lie below $h$; see \Cref{fig:pm3sat} for an example.

\begin{figure}
\begin{subfigure}{\textwidth}
    \centering
    \includegraphics[page=6]{pm3satpath}
    \subcaption{}
    \label{fig:pm3satA}
\end{subfigure}%
\newline
\begin{subfigure}{\textwidth}
    \centering
    \includegraphics[page=7]{pm3satpath}
    \subcaption{}
    \label{fig:pm3satB}
\end{subfigure}%
\caption{
\textbf{\textsf{(a)}}
A rectilinear representation of the incidence graph of $\phi$ (middle).
Highlighted on the left is the part of the constructed \ListPointEmbedding instance corresponding to the variable $x$ (assigned to true) and clause $C$.
The chosen state for each wedge is shown as black edges, the second state as dashed edges.
The chosen realization of the edge $ab$ in the clause gadget is highlighted in orange.
The right shows an additional possible realization of $ab$, corresponding to a different truth assignment.
\textbf{\textsf{(b)}}
The augmentation of the instance to a path.
The route of the path is highlighted in the incidence graph; additional vertices and edges of the path in the \ListPointEmbedding-instance are drawn in red.
The two admissible points of the clause gadget on which $v$ may be placed are shown as red~squares.
}
\label{fig:pm3sat}
\end{figure}

\begin{restatable}\restateref{thm:hardness-paths}{theorem}{theoremHardnessPath}
\label{thm:hardness-paths}
  \ListPointEmbedding is \NP-complete even for bi-labeled paths and cycles.
\end{restatable}
\begin{proofsketch}
Starting with the rectilinear representation of the incidence graph of $\phi$, we construct an equivalent instance $\InstanceLong$ of \ListPointEmbedding by replacing variables and clauses with gadgets as follows.
We first present the general approach and subsequently show how the graph of the resulting instance can be augmented to form a path (respectively a cycle).

The basic building blocks of our reduction are called \emph{wedges}. 
A wedge consists of a bi-labeled degree-2 vertex that is adjacent to two single-labeled vertices and thus admits two possible realizations that we use to represent truth assignments.
We refer to these realizations as the \emph{true-state} and the \emph{false-state} of the wedge.

Each clause gadget consists of an edge $ab$ with $L(a) = \{p^1_a, p^2_a\}$ and $L(b) = \{p^1_b, p^2_b\}$, where $p^1_a, p^2_a, p^1_b$, and $p^2_b$ are placed as shown in \Cref{fig:pm3satA}.
The clause gadget additionally contains \emph{literal wedges} corresponding to its three literals, placed in such a way that the edge $ab$ can be realized if and only if at least one of the wedges is in its true-state.
When in true-state, each of the literal wedges extends straight towards its corresponding gadget, along the corresponding edge of the rectilinear representation of the incidence graph.

A variable gadget for a variable $x$ consists of a \emph{variable wedge} $w_x$ representing the truth assignment of $x$.
The true-state of every positive literal wedge that corresponds to $x$ enters the variable gadget from above and intersects the false-state of $w_x$; see \Cref{fig:pm3satA}.
This guarantees that, if $w_x$ is in its false-state, the same holds for the corresponding positive literal wedges.
If $w_x$ is in true-state, these wedges can be in either state.
Analogously, the true-state of the negative literal wedges enter from below and intersect the true-state of $w_x$.

We can show that the resulting instance $(G, S, L)$ of \ListPointEmbedding is a yes-instance if and only if $\phi$ has a satisfying assignment.
We can then turn the graph into a path by connecting the wedges using additional vertices and edges as illustrated in \Cref{fig:pm3satB}.
While most of the new vertices are single-labeled, each clause gadget requires an additional bi-labeled vertex $v$ that ensures that the new edges of the path can be realized, regardless of the chosen realization for the edge $ab$ of the clause gadget. By connecting the two endpoints of the path with an additional path below (dashed line in \Cref{fig:pm3satB}), we can also turn the graph into a cycle.
\end{proofsketch}
\begin{prooflater}{ptheoremHardnessPath}
Starting with the rectilinear representation of the incidence graph of $\phi$, we construct an equivalent instance $\InstanceLong$ of \ListPointEmbedding by replacing variables and clauses with gadgets as follows.
We first present the general approach and subsequently show how the graph of the resulting instance can be augmented to form a path (respectively a cycle).

The basic building blocks of our reduction are called \emph{wedges}. 
A wedge consists of a bi-labeled degree-2 vertex that is adjacent to two single-labeled vertices and thus admits two possible realizations that we use to represent truth assignments.
We refer to these realizations as the \emph{true-state} and the \emph{false-state} of the wedge.

Each clause gadget consists of an edge $ab$ with $L(a) = \{p^1_a, p^2_a\}$ and $L(b) = \{p^1_b, p^2_b\}$, where $p^1_a, p^2_a, p^1_b$, and $p^2_b$ are placed as shown in \Cref{fig:pm3satA}.
Let $x'$, $y'$, and $z'$ denote the three literals of the corresponding clause.
The gadget contains one wedge corresponding to $x'$ whose false-state only intersects the realization $p^1_ap^2_b$ of $ab$, one wedge corresponding to $z'$ whose false-state only intersects $p^2_ap^2_b$, and one wedge corresponding to $y'$ whose false-state only intersects $p^1_ap^1_b$ and $p^2_ap^1_b$.
We refer to these wedges as the \emph{literal wedges}.
In their respective true-states, the wedges intersect none of the possible realizations of the edge $ab$ and instead extend straight towards their respective variable gadget, along the corresponding edge of the rectilinear representation of the incidence graph.
Therefore, the edge $ab$ can be realized if and only if at least one of the three wedges is in its true-state.

A variable gadget for a variable $x$ consists of a \emph{variable wedge} $w_x$ that represents the truth assignment of $x$.
The true-state of every positive literal wedge that corresponds to $x$ enters the variable gadget from above and intersects the false-state of $w_x$; see \Cref{fig:pm3satA}.
This guarantees that, if $w_x$ is in false-state, the same holds for the corresponding positive literal wedges.
If $w_x$ is in true-state, these wedges can be in either state.
Analogously, the true-state of every negative literal wedge that corresponds to $x$ enters from below and intersects the true-state of $w_x$.

\begin{claim}
    \label{lem:pm3sat-reduction}
  $\InstanceLong$ is a yes-instance of \ListPointEmbedding if and only if $\phi$ has a satisfying assignment.
\end{claim}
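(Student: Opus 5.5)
We prove the two directions separately. Both rest on the geometric properties of the gadgets: the three literal wedges of a clause gadget block exactly the stated realizations of $ab$, and the true-state of a positive (negative) literal wedge of $x$ crosses only the false-state (true-state) of $w_x$. We also use the fact that no other pair of potential edges from distinct gadgets crosses. This last fact is a construction invariant, obtained by placing the gadgets in disjoint neighborhoods of the rectilinear representation and routing the literal wedges along its edges.

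\emph{($\Leftarrow$)} Let $\beta$ be a satisfying assignment of $\phi$. Put each $w_x$ into the state $\beta(x)$. Put a positive literal wedge of $x$ into its true-state exactly if $\beta(x)=\textsc{True}$, and a negative literal wedge of $x$ into its true-state exactly if $\beta(x)=\textsc{False}$. By the invariant, a literal wedge in true-state crosses only the state of $w_x$ it is designed to block, and that state is not chosen. Each clause has a true literal, so at least one of its literal wedges is in its true-state.

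This free wedge leaves some realization of $ab$ uncrossed, since the three false-states cover the four realizations in the pattern $\{p^1_ap^2_b\}$, $\{p^2_ap^2_b\}$, $\{p^1_ap^1_b,p^2_ap^1_b\}$. Concretely, if $x'$ is in true-state choose $p^1_ap^2_b$; if $z'$ is, choose $p^2_ap^2_b$; if $y'$ is, choose $p^1_ap^1_b$. The chosen segment is crossed by no false-state, because each realization is hit by exactly one wedge's false-state. It is also not crossed by any true-state, since true-states avoid all realizations of $ab$. Finally, wedges sharing no geometric region do not cross. This gives an $L$-realization.

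\emph{($\Rightarrow$)} Given an $L$-realization, set $\beta(x)$ according to the state of $w_x$. If $\beta(x)=\textsc{False}$, then every positive literal wedge of $x$ must be in its false-state, since its true-state crosses the false-state of $w_x$. Symmetrically, if $\beta(x)=\textsc{True}$, every negative literal wedge of $x$ is in its false-state. In each clause gadget the edge $ab$ is realized by one of the four segments, and each segment is crossed by the false-state of one literal wedge. Hence that wedge is in its true-state, so its variable wedge has the matching value, and the literal is true under $\beta$. Thus $\beta$ satisfies $\phi$.

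The main obstacle is justifying the invariant that only the intended crossings occur. This matters most where several literal wedges enter the same variable gadget from one side, and where long wedges follow bent rectilinear edges, which might be impossible with a single straight wedge. I would handle the bent edges by spacing the literal wedges apart horizontally and using the grid layout of the representation. If a wedge cannot follow a bent edge in one straight piece, I would chain several wedges, each copying its state to the next via a forced crossing.
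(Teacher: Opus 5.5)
Your proof is correct and follows the paper's argument. For the satisfying-assignment direction, you set each wedge's state from the assignment (positive literal wedges true iff $x$ is true, negative ones iff $x$ is false) and realize $ab$ on a segment hit by no false-state. For the other direction, you read the assignment off the variable wedges and note that the literal wedge blocking the chosen realization of $ab$ must be in its true-state, which forces the matching state of $w_x$. The chaining fallback is unnecessary: the paper relies on the same ``by construction'' crossing invariant, with literal wedges extending straight along the corresponding edge of the rectilinear representation.
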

\begin{claimproof}
  First assume that $\phi$ has a satisfying assignment $\varphi$.
  For each variable $x$ of $\phi$, we embed~$w_x$ and the literal wedges corresponding to $x$ in their true-state or false-state depending on $\varphi(x)$.
  By construction, we obtain no crossings between wedges.
  Moreover, since $\varphi$ is satisfying, at least one literal wedge of each clause gadget is in its true-state.
  As argued above, this implies that each clause gadget can be realized without crossings.
  Hence we obtain a crossing-free realization of $\InstanceLong$.
  
  Conversely, assume $\InstanceLong$ admits a geometric point set embedding $\Gamma$.
  We construct a truth assignment $\varphi$ of $\phi$ as follows.
  For each variable $x$ of $\phi$, we set $\varphi(x) = \top$ if and only if the corresponding wedge $w_x$ is in its true-state.
  It remains to show that $\varphi$ satisfies $\phi$.
  
  Recall that our construction ensures that, if a variable wedge $w_x$ is in its false-state, all corresponding positive literals are in false-state and if $w_x$ is in true-state, all corresponding negative literals are in false-state.
  However, if $w_x$ is in its true-state, the corresponding positive literals can be in either state and if it is in false-state, the corresponding negative literals can be in either state.
  For each clause~$c$, the assignment $\varphi'$ that the literal wedges of~$c$ in $\Gamma$ induce for the literals of $c$ therefore does not necessarily coincide with $\varphi$.
  However,~$\varphi'$ can be obtained from $\varphi$ by flipping the truth value of some literals from $\top$ to $\bot$.
  Since the clause gadget only admits a realization if at least one of its literal wedges is in true-state, the assignment $\varphi'$ (and thus also $\varphi$) satisfies $c$.
  Therefore, $\varphi$ satisfies $\phi$.
\end{claimproof}

We now show how the resulting instance can be connected to form a path.
\Cref{fig:pm3satB} shows the route we want the path to take through the incidence graph.
The path starts to the left of the first variable, visits all positive literal wedges of the variable from left to right, then uses the variable wedge to return to the left and symmetrically proceeds for the negative literal wedges; see \Cref{fig:pm3satB}.
Subsequently, the path exits the variable gadget to the right and proceeds to the next variable.

The second literal wedge of each clause additionally has the special purpose of connecting to the isolated edge $ab$ of the clause gadget.
To this end, the path enters the clause gadget, visits the edge $ab$, and subsequently connects to the left endpoint $u$ of the literal wedge; see \Cref{fig:pm3satB}.
To ensure that the connection from $b$ to $u$ can always be realized, we add an additional bi-labeled vertex $v$ adjacent to $b$ and $u$ whose corresponding two points are placed as shown in \Cref{fig:pm3satB}.
This way, the additional edges of the path can always be realized, regardless of the realization of the edge $ab$.

Note that we can always add additional vertices and edges to connect the two endpoints of the path and thus turn the graph into a cycle.
Since the reduction can be computed in polynomial, time this concludes the proof.
\end{prooflater}

\subsection{A Fixed-Parameter Algorithm by the Vertex Cover Number}
\label{sec:general-vc}
In this section, we explore the complexity of \ListPointEmbedding when parameterized by the vertex cover number $\vcn(G)$ of a bi-labeled graph $G$.
The vertex cover number denotes the minimum number of vertices whose deletion removes all edges.
Let $\Instance = \InstanceLong$ be a bi-labeled instance of \ListPointEmbedding and let $C \subseteq V$ be a vertex cover of $G$ of size $k = \Size{C}$.

We first branch over all realizations of the vertices from $C$ on points in $S$.
Since \Instance is bi-labeled, we have $\Size{L(p)} \leq 2$ for all $p \in C$.
Consequently, this yields at most $2^k$ different branches.
In each branch, the placement of vertex cover vertices on $S$ is fixed and it only remains to check whether this partial realization $\Drawing'$ can be extended to one of $G$ by placing the remaining vertices $V(G) \setminus C$.
Observe that for each $u \in V(G) \setminus C$, all neighbors are in~$C$.
Hence, the drawing of all edges $\{u,v\} \in E(G)$ is completely determined once we have placed~$u$.
The fact that $u$ is bi-labeled allows us devise a \probname{2SAT}-formula to check if there is a realization $\Drawing$ such that $ \Drawing(v) = \Drawing'(v)$, for all $v \in C$\onlyShort{; see %
\NewText{the full version~\cite{ARXIV}} for details}.

\begin{statelater}{vertexCoverSat}
\begin{restatable}\restateref{lem:sec:parameterized-algorithms-vc-sat}{lemma}{lemmaVCSat}
	\label{lem:sec:parameterized-algorithms-vc-sat}
	For a bi-labeled instance $\InstanceLong$, let $\Drawing'$ be a realization of a vertex cover $C$ of $G$ of size $k$.
	We can check in $\BigO{n^2 \cdot (s + n\log n)}$ time if there exists a realization $\Drawing$ that extends $\Drawing'$.
\end{restatable}
\begin{proof}
	Let $\Instance = \InstanceLong$, $C$, and $\Drawing'$ be as in the statement.
	Given $\Drawing'$, it only remains to place the non-vertex cover vertices $U \coloneqq V(G) \setminus C$ of $G$.
	We now describe how to find such a placement in linear time with the help of a \probname{2Sat} formula $\phi$.

	\proofsubparagraph*{Variables.}
	Let $v \in U$.
	As $\Instance$ is bi-labeled, we have $\Size{L(v)} \leq 2$.
	We can assume without loss of generality $L(v) = \{p, q\}$ since the placement of $v$ is otherwise fixed.
    In the following, we refer for a vertex $v \in U$ with $p(v)$ and $q(v)$ to the two points in $L(v)$.
	For $v \in U$, we introduce two variables $x_{v,p(v)}$ and $x_{v,q(v)}$ that indicate with a truth assignment $\Upsilon$ whether $\Gamma(v) = p$ ($\Upsilon(x_{v,p}) = 1$) or $\Gamma(v) = q$ ($\Upsilon(x_{v,q}) = 1$) should hold.\footnote{We remark that in principle one variable per vertex would suffice. However, having two variables facilitates the presentation of the formula. Moreover, note that in both cases the number of variables is asymptotically linear in the number of vertices.}
	In total, this gives us $2\cdot\Size{U} \in \BigO{n}$ variables.
		
	\proofsubparagraph*{Clauses.}
	Our formula $\phi$ consists of four subformulas, $\phi_1$ to $\phi_4$, each targeted at ensuring a different property of a potential solution.
	
	The first two subformular, $\phi_1$ and $\phi_2$, ensure that each vertex $v \in U$ is placed on exactly one point from $L(v) = \{p,q\}$.
	Subformular $\phi_1$ ensures that each vertex is placed on at least one point:
	\begin{align*}
		\phi_1 \coloneqq \bigwedge_{v \in U} (x_{v, p(v)} \lor x_{v, q(v)})
	\end{align*}
	Similarly, $\phi_2$ ensures that each vertex is placed on at most one point:
	\begin{align*}
		\phi_2 \coloneqq \bigwedge_{v \in U} (\lnot x_{v, p(v)} \lor \lnot x_{v, q(v)})
	\end{align*}
	
	With $\phi_3$, we ensure that we assign to every point $p \in S$ at most one vertex.
	On the one hand, we have to ensure that no vertex $v \in U$ is assigned to a point $p \in S$ with $\Drawing'(u) = p$ for some $u \in C$.
	On the other hand, no two vertices $u,v\in U$ can be assigned to $p$ simultaneously.
	For the former constraint, we define $S' \coloneqq \{\Drawing'(v) \mid v \in C\}$.
	For the latter constraint, let $K \coloneqq \{\{u,v\} \in \binom{U}{2} \mid L(u) \cap L(v) \neq \emptyset\}$ denote the vertices that share at least one admissible point.
	With $S'$ and $K$ at hand, we can define $\phi_3$ as follows:
	\begin{align*}
		\phi_3 \coloneqq \left(\bigwedge_{p \in S'} \bigwedge_{v \in \{v \in U \mid p \in L(v)\}} \lnot x_{v,p}\right) \land \left(\bigwedge_{\{u,v\} \in K} \bigwedge_{p \in L(u) \cap L(v)} (\lnot x_{u, p} \lor \lnot x_{v, p})\right)
	\end{align*}
	
	Finally, we use $\phi_4$ to encode that the solution should be crossing-free.
	Recall that for each $v \in U$ and all edges $\{u,v\} \in E(G)$, we have $u \in C$.
	Hence, when placing a vertex $v \in V$ on a point $p \in L(v)$, the resulting drawing $\Drawing' \cup (v \to p)$ contains all of its incident edges $\{u,v\} \in E(G)$.
	Thus, it only remains to ensure that for every pair of vertices $u,v \in U$, the resulting realization is planar, i.e., forbid partial realizations that contain crossings.
	For $u, v \in \binom{U}{2}$, we let $X(u,v) \coloneqq \{(p, q) \mid p \in L(u), q \in L(v) \setminus \{p\}, \Drawing' \cup \{(u \to p), (v \to q)\}\ \text{contains a crossing}\}$ denote all admissible assignments of $u$ and $v$ that cross.
	Using $X$, we can now define $\phi_4$ as follows:
	\begin{align*}
		\phi_4 \coloneqq \bigwedge_{u,v \in \binom{U}{2}} \bigwedge_{(p,q) \in X(u,v)} (\lnot x_{u,p} \lor \lnot x_{v,q})
	\end{align*}
	
	\proofsubparagraph*{Putting all together.}
	This completes the definition of our \probname{2Sat} formula $\phi = \phi_1 \land \phi_2 \land \phi_3 \land \phi_4$.
	A close analysis of $\phi$ reveals that it can be constructed in $\BigO{n^2 \cdot (s + n\log n)}$ time.
    For~$\varphi_4$, observe that for two fixed vertices $u, v \in V$, there are at most four different placements (since~$\Instance$ is bi-labeled).
    For each such placement, we can check in $\BigO{s}$ time if it is admissible and use a modified version of the Bentley–Ottmann segment intersection test~\cite{dBCvKO.CGA.2008} to check in $\BigO{n \log n}$ time if two edges cross in a given drawing.
    Moreover, there are $\BigO{n^2}$ choices for~$u$ and $v$.
	Since $\phi$ has $\BigO{n}$ variables (and polynomially-many clauses), we can check in linear time if it has a satisfying assignment.
	It remains to argue correctness of the formula.
	
	For the forward-direction ($\Rightarrow$), we take a satisfying assignment $\Upsilon$ of $\phi$ and construct a realization $\Drawing$ as follows.
	Initialize $\Drawing$ with $\Drawing'$ and set $\Drawing(v) = p$ if and only if $\Upsilon(x_{v, p}) = 1$.
	Subformulas $\phi_1$ and $\phi_2$ ensure that in $\Drawing$ every vertex is assigned to exactly one of its admissible points.
	Together with $\phi_3$, they furthermore ensure that $\Drawing$ is well-defined, in particular, no two vertices are assigned to the same point.
	Finally, assume that two edges $e_1 = \{u_1,v_1\}$ and $e_2 = \{u_2, v_2\}$ cross in $\Drawing$.
	We can identify at least one clause in $\phi_4$ that is not satisfied by $\Upsilon$:
	Without loss of generality $u_1 \in U$ holds.
	If $e_2 \cap C = e_2$, then any clause with the variable $x_{u_1, \Drawing(u_1)}$ is not satisfied as $e_1$ crosses an edge between two vertex cover vertices.
	Otherwise, without loss of generality, assume $u_2 \in U$.
	As $e_1$ and $e_2$ cross in $\Drawing$, they also cross in $\Drawing' \cup \{(u_2 \to \Drawing(u_1)), (u_2 \to \Drawing(u_2))\}$.
	Hence, by the definition of $\Drawing$, the clause $x_{u_1, \Drawing(u_1)} \lor x_{u_2, \Drawing(u_2)}$ is not satisfied.
	Consequently, $e_1$ and $e_2$ can not cross and altogether $\Drawing$ is a realization that extends $\Drawing'$.	
	
	For the backward-direction ($\Leftarrow$), take a realization $\Drawing$ that extends $\Drawing'$.	
	We construct a truth assignment $\Upsilon$ by setting $\Upsilon(x_{v, p})$ if and only if $\Drawing(v) = p$, for all $v \in U$.
	As $\Drawing$ is a realization, this is well-defined.
	To see that $\Upsilon$ satisfies $\phi$, it suffices to observe that it satisfies~$\phi_1$ and $\phi_2$ since every vertex, and in particular every $v \in U$, is placed one of its admissible points.
	Furthermore, as $\Drawing$ is a realization, no two vertices are assigned on the same point and no two edges cross.
	Hence, $\Upsilon$ satisfies $\phi_3$ and $\phi_4$, respectively.
	Since it satisfies all subformulas, it also satisfies $\phi$.
	Combining all, the lemma statement follows.
\end{proof}
\end{statelater}

\onlyLong{
We now use \Cref{lem:sec:parameterized-algorithms-vc-sat} to show \Cref{thm:parameterized-algorithms-vc-bilabeled}.
}
\begin{restatable}\restateref{thm:parameterized-algorithms-vc-bilabeled}{theorem}{theoremVCBiLabeled}
	\label{thm:parameterized-algorithms-vc-bilabeled}
	\ListPointEmbedding for a bi-labeled graph $G$ can be solved constructively in time $\BigO{2^{\vcn(G)}\cdot (n^2 \cdot (s + n \log n)}$ and is thereby \FPT\ in $\vcn(G)$.
\end{restatable}
\begin{prooflater}{ptheoremVCBiLabeled}
	Let $\Instance = \InstanceLong$ be a bi-labeled instance of \ListPointEmbedding.
	We first compute a minimum vertex cover $C \subseteq V$ of $G$.
	Such a vertex cover of size $\Size{C} = \vcn(G) = k$ can be computed in $\BigO{2^{k} + k\cdot n}$ time~\cite{CKX.Iub.2010}.
	We then branch to determine the realization of $C$.
	As $\Instance$ is bi-labeled, this yields $\BigO{2^k}$ different branches.
	Furthermore, as the branching is exhaustive, we preserve the existence of a solution.
	In each branch, we first check if the drawing $\Drawing'$ is well-defined and planar, i.e., if no two vertices are assigned to the same point and if no two edges cross.
	If this is not the case, we discard the branch.
	Overall, this takes $\BigO{s + k \log k}$ time since we can use a Bentley–Ottmann segment intersection test to check for crossings~\cite{dBCvKO.CGA.2008}.
	If $\Drawing'$ is a partial realization, we use \Cref{lem:sec:parameterized-algorithms-vc-sat} to check in if the branch can be extended to a solution for \Instance.
	Overall, this takes $\BigO{s + k \log k + (n^2 \cdot (s + n \log n)}$ time per branch.
	Combining this with the preprocessing time to compute $C$ and using $k \leq n$, the statement follows.
\end{prooflater}

\subsection{Realizing Two Tri-Labeled Stars is Hard}
\label{sec:tri-labeled-vertex-cover-2}

While \ListPointEmbedding of bi-labeled graphs is \FPT\ with regard to the vertex cover number, this changes if vertices have more than two labels. The previous section is tight in the sense that \ListPointEmbedding is hard for tri-labeled graphs already for graphs with vertex cover number two.

\begin{figure}
		\centering
		\includegraphics[scale=0.7]{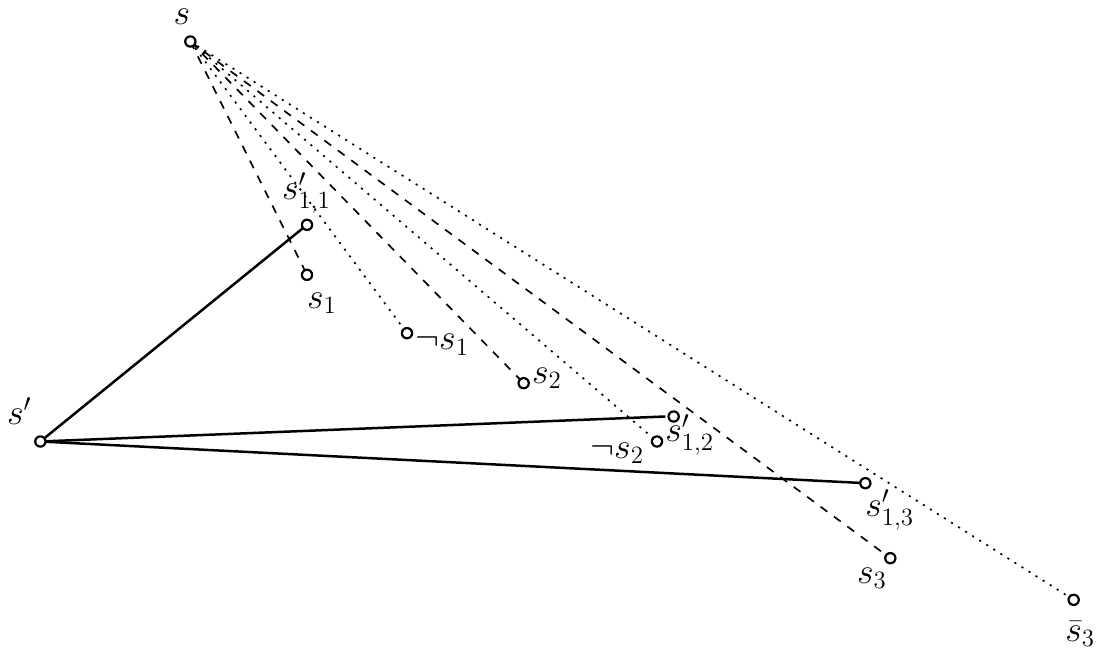}
		\caption{Reduction representing the clause~$\lnot{x}_1\lor x_2 \lor \lnot{x}_3$.}
		\label{fig:vertexcoverinstance-appendix}
\end{figure}

We reduce from \textsc{3SAT}. For this purpose assume, we have a \textsc{3SAT}-formula $\phi$ on variables $x_1,\ldots,x_n$ and clauses $C_1,\ldots,C_m$ in conjunctive normal form.
We represent all variables and all clauses by one star, respectively.

\subparagraph{Star of Variables.} We add one vertex $v$ to $V(G)$ and a point $s$ to $S$ such that $L(v)=\{s\}$. For every variable $x_i$ we add a vertex $v_i$ and two points $s_i$ and $\lnot s_i$ and set $L(v_i)=\{s_i,\lnot s_i\}$. Further, we add to $E(G)$ the edge $\{v,v_i\}$ such that in the embedding the straight line drawing of the graph will either contain the edge $\{s,s_i\}$ or the edge $\{s,\lnot{s}_i\}$. The points $s_i$ and $\lnot{s}_i$ for all variables together will be spread around $s$ such that the resulting edges form a fan around $s$ as seen in Figure~\ref{fig:vertexcoverinstance-appendix}.

\subparagraph{Star of Clauses.} We add one vertex $v'$ to $V(G)$ and a point $s'$ to $S$ such that there exists a straight line from $s'$ to any of the vertices $s_i$ or $\lnot{s}_i$ that is not crossed by any other edge of that form. Further $L(v') = \{s'\}$. For every clause $C_j$ we add a vertex $v'_j$ to $V(G)$ and an edge $\{v',v'_j\}$ to $E(G)$. Further, we add three points $s'_{j,1}$, $s'_{j,2}$ and $s'_{j,3}$ to $S$ one for each literal in $C_i=l_{j,1}\lor l_{j,2} \lor l_{j,3}$ such that if $l_{j,1} = x_i$, then the edge $\{s',s_{j,1}\}$ crosses the edge $\{s_{i,1},s_{i,2}\}$ and if $l_{j,1} = \lnot{x}_i$, then the edge $\{s',s_{j,1}\}$ crosses the edge $\{s_{i,1},\lnot{s}_{i,2}\}$. We repeat for $l_{j,2}$ and $l_{j,3}$. We conclude by setting $L(v'_j) = \{s'_{j,1},s'_{j,2},s'_{j,2}\}$.

\begin{restatable}\restateref{thm:threecover}{lemma}{threecover}
		\label{thm:threecover}
		$G=(V(G),E(G))$ can be embedded on $S$ while respecting $L$ if and only if~$\phi$ has a satisfying assignment.
\end{restatable}
\begin{proof}%
    First, assume $\phi$ has a satisfying assignment. For the vertices $v$ and $v'$ there is only one option to place the vertex, so we place the vertices on that point. For every variable $x_i$, if $x_i$ has the value \textsc{True}, then we place the vertex~$v_{i}$ on $\lnot{s}_{i}$. By this, we also add the edge $\{s_{i},\lnot{s}_{i}\}$ to the drawing. Otherwise, if $x_i$ has as the value \textsc{False}, then we place the vertex~$v_{i}$ on $s_{i}$, adding the edge $\{s,s_{i}\}$ to the drawing. For every clause $C_j$, pick one of the literals $l_{j,1},l_{j,2}$ or $l_{j,3}$ in $c_j$ that is satisfied and place the vertex $v'_j$ on $s'_{j,1}, s'_{j,2}$ or $s'_{j,3}$ respectively. If, without loss of generality, $l_{j,1}$ is satisfied and has value \textsc{True}, then the resulting edge $\{s',s'_{j,1}\}$ will not be crossed. The only edge that could cross $\{s',s'_{j,1}\}$ is $\{s,s_{i}\}$ such that $l_{j,1}$ is the positive form of $x_i$, but this edge does not exist, since we added $\{s_{i,1},\lnot{s}_{i,2}\}$ instead, by the choices made based on the truth assignment. We conclude that all vertices are placed and the resulting straight line embedding of the graph is plane.

    Conversely, assume that $G=(V(G),E(G))$ can be embedded on $S$ while respecting~$L$. We go through all the variable gadgets and if the vertex $v_i$ is placed on point $s_{i}$, we assign to $x_i$ the value \textsc{False}, conversely if $v_i$ is placed on $\lnot{s}_{i}$ we assign \textsc{True}. The resulting assignment is a satisfying assignment for $\phi$ since in every gadget corresponding to a clause~$c_j$, the vertex has been placed, without loss of generality, on point $s'_{j,1}$. This implies that the edge $\{s',s'_{j,1}\}$ is not crossed in the drawing and, by construction, the literal $l_{j,1}$ is satisfied. Therefore, every clause has at least one satisfied literal and consequently $\phi$ has a satisfying assignment.
\end{proof}

\begin{figure}[t]
		\centering
		\includegraphics[scale=0.7,page=2]{"Vertex_cover_three_choices"}
		\caption{%
        The black edges of the top star correspond from left to right to the truth assignments $x_1=\textsc{True}$, %
        $x_2 = \textsc{True}$, $x_3=\textsc{False}$, $x_4=\textsc{True}$\NewText{, and} $x_5 = \textsc{True}$. The edges of the bottom star correspond to the three possible ways to realize an edge of the clause in the same color.} 
		\label{fig:vertexcoverinstance}
\end{figure}
We provide in \cref{fig:vertexcoverinstance} an illustration of the reduction and conclude:

\begin{restatable}\restateref{thm:threecoverx}{theorem}{threecoverx}
		\label{thm:threecoverx}
		\ListPointEmbedding is \NP-complete even for tri-labeled graphs $G$ %
        with $\vcn(G) = 2$.
\end{restatable}

\subparagraph{Bi-labeled Optimization Version.}
We contrast the positive results for \ListPointEmbedding from a paramatrized perspective with negative results the following optimization variant. Given a bi-labeled graph $G=(V,E)$ (already with vertex cover number $2$) and a point set~$S$, embed the largest possible subgraph of $G$ (with respect to the number of edges) on $S$. %

In particular, we show that this optimization variant of our problem is \APX-hard by modifying our \NP-hardness reduction for the tri-labeled case to a \PTAS-reduction from MAX-2SAT(3)~\cite{vazirani2001approximation}. An instance of MAX-2SAT(3) has a a 2SAT formula $\phi$ as input in which every variable appears in at most 3 clauses. The goal is to satisfy as many clauses as possible by choosing an according truth assignment.

\subparagraph{Star of Variables.} Instead of one vertex $v_i$ with $L(v_i)=\{x_i,\lnot s_i\}$ to represent the assignment of every variable, we make four copies $v_i$, $v_i^{1}$ to $v_i^{4}$ with $L(v_i^j) = \{s_i^j,\lnot s_i^j\}$. If previously straight lines corresponding to clauses crossed the line $ss_i$, then now they cross all four lines $ss_i^j$. A symmetric statement can be made about the points representing the negated variable.

\subparagraph{Star of Clauses.} Observe that now every vertex corresponding to a clause is bi-labeled instead of tri-labeled, since every clause only contains two variables instead of three.

Observe that we can always embed at least $4n$ edges without crossings, namely the edges that correspond to the star of variables.

\begin{restatable}\restateref{lem:preapx}{lemma}{lemmapreapx}
\label{lem:preapx}
    We can embed $4N+k$ edges of $G$ on $S$ while respecting $L$ if and only if $\phi$ has an assignment satisfying $k$ clauses.
\end{restatable}
\begin{proof}%
    If there is an assignment satisfying $k$ clauses, then we embed the edges corresponding to the variable star as explained in \cref{thm:threecover}. For each satisfied clause, we can embed one additional edge, hence, in total $4N+k$ edges are embedded.

    If $4N+k$ edges can be embedded, then there exists an emedding with the same size in which the entire star of variables and $k$ edges of the star of clauses are embedded: First, not embedding all edges of a variable in order to embed all edges corresponding to clauses containing this variable results in fewer embedded edges, since a variable occurs in at most three clauses. Second, placing all points for a variable on $s_i^j$ or all points on $\lnot s_i^j$ is never worse than distributing between them.
    
    Now, we can deduce an assignment satisfying $k$ clauses the same way as in the proof of \cref{thm:threecover}.
\end{proof}

\begin{restatable}\restateref{thm:preapx}{theorem}{theorempreapx}
\label{thm:preapx}
    Finding a realization with the maximum number of edges for a bi-labeled graph $G$ is \APX-hard even if $\vcn(G) = 2$.

\end{restatable}

\begin{proof}%
    We show that this problem is \APX-hard by modifying our \NP-hardness reduction for the tri-labeled case to a \PTAS-reduction from MAX-2SAT(3)~\cite{vazirani2001approximation}. An instance of MAX-2SAT(3) has a a 2SAT formula $\phi$ as input in which every variable appears in at most 3 clauses. The goal is to satisfy as many clauses as possible by choosing an according truth assignment. Instead of one single variable-edge, we use four edges that all cross the clause-edges that represent conflicting truth assignments. Clause-edges only have two possible embedding, instead of three, since clauses only contain two literals.

    Let $k^\ast$ be the maximum number of clauses, we can satisfy in a 2SAT(3) formula $\phi$. Note that $k^\ast > \frac{3}{4}M$, since a random assignment of the variables satisfies $\frac{3}{4}$ of the clauses and hence there exists an assigment that satisfies at least this many clauses. Further, observe that we can omit variables from $\phi$ that do not appear in any clause. Additionally, for variables that appear in only a single clause, we trivially assign the truth value that satisfies said clause. Thus, we omit the variable and the clause. So we can without loss of generality assume that all the remaining variables appear in at least two clauses. Since every clause contains two variables, we get $M\geq N$.

    Now assume, we could approximate the optimal number of embedded edges $e^\ast$ by a factor of $(1-\epsilon)$. By \cref{lem:preapx} we get that $e^\ast = 4N+ k^\ast$. By subtracting $4N$ we can calculate that we can approximate $k^\ast$:

    \begin{align*}
        (1-\epsilon)(e^\ast) - 4N & = (1-\epsilon)(4N+k^\ast) - 4N = (1-\epsilon)k^\ast - 4\epsilon N  \\ & \geq (1-\epsilon)k^\ast - 4 \epsilon M \geq (1-\epsilon)k^\ast - 3 \epsilon k^\ast = (1-4\epsilon)k^\ast
    \end{align*}

    Now if there were a \PTAS, approximating the number of embedded edges then there would also be one for the number of satisfied clauses in $\phi$ which is not possible, unless \P=\NP.
\end{proof}

\subsection{Efficient Algorithms for Two Stars in Convex Position or One Star}
\label{sec:parameterized-algorithms-star}
Recall \Cref{thm:threecoverx}, which essentially rules out any general algorithm parameterized by the vertex cover number and thus ``forced us'' to consider bi-labeled instances.
Since \Cref{thm:threecoverx} ``only'' shows hardness for instances with vertex cover number two, it is natural to ask what happens if the vertex cover number is one, i.e., if $G$ is a single star.
In this section, we show that we can solve such instances in polynomial time, independent of the structure of $L$.

The high-level idea of our approach is as follows.
Since $G$ is a star, all edges are incident to its center $v^* \in V(G)$, i.e., no two edges can cross.
Thus, we only need to ensure that no two vertices are placed on the same point.
To this end, we reduce the problem of finding a realization to the problem of finding a matching of a certain size in a carefully constructed bipartite auxiliary graph, which can be done in polynomial time.

\begin{restatable}\restateref{thm:parameterized-algorithms-star}{theorem}{theoremStar}
	\label{thm:parameterized-algorithms-star}
    \NewText{If we can find a maximum matching in an $\eta$-vertex $\theta$-edge bipartite graph in time $\BigO{f(\eta, \theta)}$, then
    \begin{itemize}
        \item \ListPointEmbedding, where $G$ is a single star, can be solved constructively in time $\BigO{f(s + n, s \cdot n) + s \cdot n}$ and
        \item \ConvexListPointEmbedding, where $G$ is the disjoint union of two stars, can be solved constructively in time $\BigO{s^2 \cdot (f(s+n, s \cdot n) + s \cdot n)}$.
    \end{itemize}    
    }
\end{restatable}
\NewText{For example, we can apply in \Cref{thm:parameterized-algorithms-star} the $\BigO{\sqrt{\eta}\cdot\theta}$-time algorithm from Hopcroft and Karp~\cite{HK.AMM.1973}.
Plugging the running time into \Cref{thm:parameterized-algorithms-star} (and observing $s + n \leq 2s$) yields a $\BigO{s^{1.5} \cdot n}$-time constructive algorithm for \ListPointEmbedding where $G$ is a single star.
Moreover, it yields a $\BigO{s^{3.5} \cdot n}$-time constructive algorithm for \ConvexListPointEmbedding where $G$ is the disjoint union of two stars.}
\begin{proofsketch}[Proof sketch of \Cref{thm:parameterized-algorithms-star}.]
	We construct an auxiliary bipartite graph $H$ that captures with its edges the admissible placements of all vertices $v \in V(G)$ onto points $p \in S$.
    More concretely, we have $V(H) = V(G) \partition S$ and $\{v,p\} \in E(H)$ if and only if $p \in L(v)$.
    \NewText{Observe that $H$ has $n + s$ vertices and at most $s \cdot n$ edges.
    It can be constructed in time $\BigO{s \cdot n}$.}
    Every matching $F \subseteq E(H)$ corresponds to a realization $\Drawing'$ of the subgraph $G' \subseteq G$ induced %
    \NewText{by} the endpoints of $F$ that correspond to the vertices in $V(G)$.
	We now can %
    \NewText{compute} a maximum matching $F^* \subseteq E(H)$ in $H$ and check if $\Size{F^*} = \Size{V(G)} = n$. 

    If $G$ consists of two disjoint stars and $S$ is in convex position, we observe that every solution $\Drawing$ on a convex point set $S$ has a separating line between the two realizations; see also \Cref{fig:two-stars-convex}.
    We can enumerate all such lines to obtain the statement.
\end{proofsketch}
\begin{prooflater}{ptheoremStar}[Proof of \Cref{thm:parameterized-algorithms-star}.]
	Let $\Instance = \InstanceLong$ be an instance of \ListPointEmbedding where $G$ is a star.
	We construct an auxiliary bipartite graph $H$ that captures with its edges the admissible placements of all vertices $v \in V(G)$ onto points $p \in S$, i.e., $\{v,p\} \in E(H)$ if and only if $p \in L(v)$.
	Formally, $V(H) = V(G) \partition S$ and $E(H) = \{\{v,p\} \mid v \in V(G), p \in S \cap L(v)\}$.
	Note that $H$ has $n + s$ vertices, as $n \leq s$ must hold, and at most $s \cdot n$ edges.
	It can be constructed in $\BigO{s \cdot n}$ time.
	
	We now make the following crucial observation.
	Every matching $F \subseteq E(H)$ corresponds to a realization $\Drawing'$ of the subgraph $G' \subseteq G$ induced on the endpoints of $F$ that correspond to the vertices in $V(G)$.
	In particular, as $F$ is a matching, every vertex $v \in V(G')$ is placed onto exactly one point $p \in S$ and no two vertices can be placed at the same point.
	By the definition of $H$, every vertex is placed at an admissible point.
	Moreover, we observe that no two edges can cross, as all edges are incident to the center of the star.
	\NewText{Let $F^* \subseteq E(H)$ be a maximum matching in $H$.}
    By above arguments, \Instance admits a solution $\Drawing$ if and only if $\Size{F^*} = \Size{V(G)} = n$.
	Combining all, the statement follows.

    If $G$ consists of two disjoint stars and $S$ is in convex position, we observe that every solution $\Drawing$ on a convex point set $S$ has a separating line between the two realizations; see also \Cref{fig:two-stars-convex}.
    We can enumerate all such lines \NewText{and treat it as two separate instances}.
\end{prooflater}

\begin{figure}
	\centering
	\includegraphics{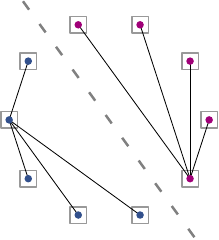}
	\caption{The dashed line separates the blue and purple star in $\Drawing$ on a convex point set.}
	\label{fig:two-stars-convex}
\end{figure}

\section{Extending Partial Realizations}
\label{sec:extension}
\Cref{sec:convex,sec:general} demonstrate the (computational) difficulty of \ListPointEmbedding.
In recent years, extending incomplete drawings has proven to be a successful perspective onto difficult problems~\cite{eghkn-ep1d-20,dfgn-pcesl-24,DFGN.PCE.2026,bgkmn-eopgdft-23,bgkmn-eopgdft-25,PLP,QLE} and in this section, we attack \ListPointEmbedding from this angle.
A partial realization~$\Gamma_H$ for some subgraph $H \subseteq G$ can be encoded by setting $L(v) = \{\Gamma_H(v)\}$ for every $v \in V(H)$.
In the following, we, therefore, define $\Vadd = \{v \in V(G) \mid \Size{L(v)} > 1\}$ to be the remaining \emph{missing} vertices and let $\nadd = \Size{\Vadd}$ denote their number.
We call the corresponding problem \PartialListPointEmbedding and analyze its parameterized complexity with respect~to~\nadd.

\shortLong{
\subparagraph*{\textsf{XP}-Membership.}
First, we show that \PartialListPointEmbedding is in \XP ~with respect to \nadd.
To this end, observe that we can branch on all possible placements of the \nadd missing vertices on their $\BigO{s}$ admissible points.
For each branch, we can determine in $\BigO{s + n \log n}$ time if it corresponds to a solution using the Bentley–Ottmann segment intersection test~\cite{dBCvKO.CGA.2008}.
}{
\subsection{\textsc{PGLPSE} Parameterized by the Number of Missing Vertices is in \textsf{XP}}
\label{sec:extension-xp}
As our first result under the parameterization of the complex vertices, we show that \PartialListPointEmbedding is \XP-tractable with respect to \nadd.
This ``warm-up'' result can be achieved via a text-book branching strategy.
}

\begin{restatable}\restateref{thm:extension-xp}{theorem}{theoremExtensionXP}
	\label{thm:extension-xp}
	\PartialListPointEmbedding can be solved constructively in time $\BigO{s^{\nadd + 1}\cdot n\log n}$ and is thereby in \XP\ when parameterized by \nadd.
\end{restatable}
\begin{prooflater}{ptheoremExtensionXP}
	Let $\Instance = \InstanceLong$ be an instance of \PartialListPointEmbedding.
	Furthermore, let $U \coloneqq V(G) \setminus \Vadd$ denote the single-labeled vertices.
	We first place all vertices from $u$ on their unique admissible point.
	Afterwards, we exhaustively branch to determine the placement of $\Vadd$.
	Each of the $\BigO{s^{\nadd}}$ branches corresponds to a drawing $\Drawing$ in which each vertex is placed on an admissible point by construction.
	We use the Bentley–Ottmann segment intersection test~\cite{dBCvKO.CGA.2008} to check in $\BigO{s + n\log n}$ time if $\Drawing$ is a solution.
	Finally, observe that our exhaustive branching preserves the existence of a solution, i.e., \Instance has a solution if and only if there exists at least one branch where $\Drawing$ is a realization.
\end{prooflater}

\shortLong{
\subparagraph*{\textsf{W}[1]-hardness.}
Next, we complement \Cref{thm:extension-xp} and show that \PartialListPointEmbedding\ is \W[1]-hard when parameterized by \nadd.
}{
\subsection{A \textsf{W}[1]-hardness Reduction for the Number of Missing Vertices}
\label{sec:extension-w1}
In this section, we complement \Cref{thm:extension-xp} and show that \PartialListPointEmbedding\ is \W[1]-hard when parameterized by the number \nadd of missing vertices.
}%
To this end, we reduce from the problem \probname{Grid Tiling}, defined as follows.
Given two integers $\eta, \kappa \geq 1$, and a family $\mathcal{X}$ of $\kappa^2$ sets, $X_{i,j} \subseteq [\eta]^2$, $i,j \in [\kappa]$, in the following called \emph{tiles}, is there a tuple $x_{i,j} = (a, b) \in X_{i,j}$ for each $i,j\in[\kappa]$ such that if $x_{i,j} = (a,b)$, then $x_{i+1,j} = (a', b)$, for all $i \in [\kappa - 1], j \in [\kappa]$, and $x_{i,j+1} = (a,b')$ for all $i \in [\kappa]$ and $j \in [\kappa - 1]$, i.e., they agree on the second and first value, respectively; see also \Cref{fig:grid-tiling-example}.
\probname{Grid Tiling} is known to be \W[1]-hard parameterized by~$\kappa$.

Let $(\eta, \kappa, \mathcal{X})$ be an instance of \probname{Grid Tiling}.
We construct an instance \InstanceLong\ of \PartialListPointEmbedding\ with $\kappa^2$ missing vertices and $s, \ell \in \BigO{\eta^2}$.
On a high level, we represent each of the $\kappa^2$ tiles $X_{i,j}$, $i,j\in [\kappa]$, with $\Size{X_{i,j}}$ points, one for every $(a,b) \in X_{i,j}$.
The graph $G$ contains a $\kappa \times \kappa$-grid $K \subseteq G$, and placing a designated vertex $v_{i,j} \in V(K)$ on some point $p \in S$ will be equivalent to selecting the corresponding tuple from $X_{i,j}$.
To ensure that we preserve the existence of a solution, we introduce additional auxiliary vertices and edges to~$G$.
\shortLong{%
We use $L(\cdot)$ to essentially fix their placement on the points from $S$.
These vertices, and in particular their edges, will result as \emph{obstacles} in our instance; see the gray boxes in \Cref{fig:grid-tiling-example}.
Essentially, the obstacles leave only a narrow corridor which enforces that the grid~$K$ is drawn as an axis-aligned $\kappa\times\kappa$-grid in any solution, ensuring that the vertex placements agree on their $x$ and $y$-coordinates, respectively.
We provide in \Cref{fig:grid-tiling-example} a small example of the reduction and refer to %
\NewText{the full version~\cite{ARXIV}} for the details.
}
{
In the following, we give the full details of our construction; see also \Cref{fig:grid-tiling-example} for an example.
}
\begin{figure}
	\centering
	\includegraphics{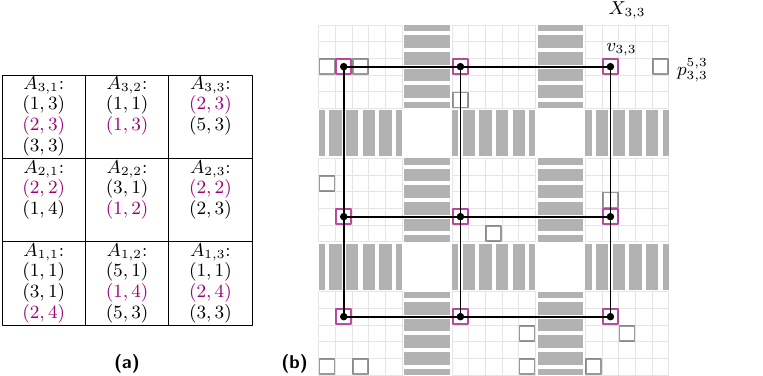}
	\caption{\textbf{\textsf{(a)}} Sample instance of \probname{Grid Tiling} adapted from Cygan et al.~\cite{CFK+.PA.2015} and \textbf{\textsf{(b)}} the corresponding instance of \PartialListPointEmbedding. We indicate a solution in purple. Obstacles are visualized as gray blocks and the light-gray $\eta\times \eta$-grids are for reference only.}
	\label{fig:grid-tiling-example}
\end{figure}
\begin{statelater}{detailsWOne}
\subparagraph*{Encoding the Tiles.}
We first describe how we encode the tiles in the our instance of \ListPointEmbedding.
Let $X_{i,j} \in \mathcal{X}$, $i,j \in [\kappa]$ be a tile of $\mathcal{X}$.
We introduce in $S$ a point $p_{i,j}^{a,b}$ for every element $(a,b) \in X_{i,j}$ and place $p_{i,j}^{a,b}$ at the position $x\left(p_{i,j}^{a,b}\right) = 2\eta(i - 1) + a$ and $y\left(p_{i,j}^{a,b}\right) = 2\eta(j - 1) + b$.
Note that the points for every $X_{i,j}$ are \NewText{placed} within a $\eta \times \eta$ square and at a distance of at least $\eta$ from every point for a tile $X_{i',j'}$ with $i' \neq i$ or $j' \neq j$; see \Cref{fig:grid-tiling-example}.
Next, we introduce a vertex $v_{i,j}$ for every $i,j \in [\kappa]$ to $G$ and set $L(v_{i,j}) = \{p_{i,j}^{a,b}\mid (a,b) \in X_{i,j}\}$.
Furthermore, we introduce the edges $\{v_{i,j}, v_{i + 1, j}\}$ and $\{v_{i,j}, v_{i + 1, j}\}$, $i,j \in [\kappa - 1]$.
Observe that $G$ contains a $\kappa \times \kappa$-grid $K$ as subgraph. 
Our reduction uses the following equivalence between solutions:
We select a tuple $(a,b) \in X_{i,j}$ if and only if we place the vertex $v_{i,j}$ onto the point $p_{i,j}^{a,a}$, i.e., $\Drawing(v_{i,j}) = p_{i,j}^{a,b}$.
To guarantee that we obtain a valid solution, we now describe how we ensure that the tuples agree on the first and second value, respectively

\subparagraph*{Encoding the Value Constraints.}
Let $X_{i,j}, X_{i,j+1} \in \mathcal{X}$, $j \in [\kappa - 1]$, be two tiles.
We have to ensure that if $\Drawing(v_{i,j}) = p_{i,j}^{a,b}$ and $\Drawing(v_{i + 1,j}) = p_{i+1,j}^{a',b'}$ then $b = b'$.
To this end, we make use of the $\eta$-wide free space between the points for $A_{i,j}$ and $A_{i,j + 1}$.
Let $\varepsilon > 0$ be a very small constant; we will fix the precise value of $\varepsilon$ towards the end of the construction.
For every $i \in [\kappa - 1]$ and $j \in [\kappa]$, we place $4 (\eta + 1)$ dummy points between the points for the tiles~$X_{i,j}$ and $X_{i + 1, j}$, $2(\eta + 1)$.
More concretely, for every $q \in [\eta]$ we place two of them at the coordinates $(2\eta(i - 1) + \eta + 1, \eta(j - 1) + q - \varepsilon)$ and $(2\eta(i - 1) + \eta + 1, \eta(j - 1) + q + \varepsilon)$, and two at the coordinates $(2\eta i - 1, \eta(j - 1) + q - \varepsilon)$ and $(2\eta i - 1, \eta(j - 1) + q + \varepsilon)$.
We add two further points at $(2\eta(i - 1) + \eta + 1, \eta(j - 1))$ and $(2\eta(i - 1) + \eta + 1, \eta(j - 1) + + \eta + 0.5)$ and at $(2\eta i - 1, \eta(j - 1))$ and $(2\eta i - 1, \eta(j - 1) + \eta + 0.5)$.
Observe that these points leave only $\eta$ corridors between the points for the tiles, each of height $2\varepsilon$.
For each of the above-introduced dummy point $p_d$, we also introduce a corresponding dummy vertex $v_d$, set $L(v_d) = \{p_d\}$, and connect always four dummy vertices to a $C_4$ such that they form a rectangular \emph{obstacle}; see also \Cref{fig:grid-tiling-example}.
We perform symmetrically for every $i \in [\kappa]$ and $j \in [\kappa - 1]$.

Note that the placement of the above-introduced obstacles is fixed as their defining vertices have a single admissible point each.
As indicated above, these obstacles create $2\varepsilon$-tall corridors between points for $X_{i,j}$ and $X_{i + 1,j}$ on the same $y$-coordinate and the edge $\{v_{i,j}, v_{i + 1, j}\}$ has to traverse the space between the points for $X_{i,j}$ and $X_{i,j + 1}$.
However, it can only do so crossing-free by traveling to one of the $2\varepsilon$-tall corridors, for a sufficiently small~$\varepsilon$, effectively ensuring that the points on which the two vertices are embedded on have the same $y$-coordinate.
We now determine the value of $\varepsilon$.
To this end, for some $i \in [\kappa - 1]$, $j \in [\kappa]$ we need to prevent $\Drawing(v_{i,j}) = p_{i,j}^{a,b}$ and $\Drawing(v_{i,j}) = p_{i,j}^{a',b'}$ with $b \neq b'$.
We do that by making sure that no matter which (incorrect) values for $a,b,a',b'$ we pick, we can never enter the corridor for $b$.
Observe that the worst case, i.e., with the lowest $y$-value at the start of the obstacle, we have with $\Drawing(v_{i,j}) = p_{i,j}^{1,b}$ and $\Drawing(v_{i,j}) = p_{i,j}^{\eta,b\pm1}$ and we assume without loss of generality $b' = b + 1$.
The edge between $v_{i,j}$ and $v_{i + 1, j}$ has a $y$-value of $y = b + \frac{b' - b}{a' - a} = b + \frac{1}{2\eta}$ at the start of the obstacle.
Thus, any $\varepsilon$ with $\varepsilon < \frac{1}{2\eta}$, e.g., $\varepsilon = \frac{1}{\eta^2}$, is sufficient for our purpose.
This completes the construction of the instance \InstanceLong of \PartialListPointEmbedding\ and we visualize in \Cref{fig:grid-tiling-example} an example construction.
\begin{restatable}\restateref{lem:extension-w1}{lemma}{lemmaExtensionWOne}
	\label{lem:extension-w1}
	\InstanceLong\ admits a realization if and only $(n, k, \mathcal{A})$ is a yes-instance of \probname{Grid Tiling}.
\end{restatable}
\begin{prooflater}{plemmaExtensionWOne}
	We show both directions separately.

    \proofsubparagraph{($\boldsymbol{\Rightarrow}$)}
    Let $\Drawing$ be a solution to \InstanceLong.
    We now construct a solution to the \probname{Grid Tiling} instance $(\eta, \kappa, \mathcal{X})$ by considering the assignment of the $\kappa\times \kappa$-grid $K$ to $S$ in $\Drawing$.
    For every $i,j \in [\kappa]$, consider the point $p \in S$ with $\Drawing(v_{i,j}) = p$.
    As $L(v_{i,j})$ contains exactly one point for every $(a,b) \in X_{i,j}$ and since $\Drawing$ is a realization, we know that $p = p_{i,j}^{a,b}$ for some $(a,b) \in A_{i,j}$ must hold.
    We now place $a_{i,j}=(a,b) \in X_{i,j}$ in our solution.
    By construction, we select~$\kappa^2$ tuples $a_{i,j}$, exactly one from each tile $X_{i,j} \in \mathcal{X}$.
    It remains to show that if $x_{i,j} = (a,b)$, then $x_{i+1,j} = (a', b)$ and $x_{i,j+1} = (a,b')$ holds for all $i, j \in [\kappa]$ (if the respective tiles exist).
    Towards a contradiction, assume that $x_{i,j} = (a,b)$ and $x_{i+1,j} = (a',b')$ with $b \neq b'$ for some $i \in [\kappa - 1]$ and $j \in [\kappa]$.
    By construction of our solution, this means $\Drawing(v_{i,j}) = p_{i,j}^{a, b}$ and $\Drawing(v_{i + 1,j}) = p_{i + 1,j}^{a', b'}$ with $b \neq b'$.
    In particular, the edge $\{v_{i,j},v_{i + 1, j}\}$ is not horizontal in $\Drawing$.
    As the incident vertices represent tuples from the two tiles $X_{i,j}$ and $X_{i,j + 1}$, the edge $\{v_{i,j},v_{i + 1, j}\}$ must traverse the obstacles between the points $p_{i,j}^{a,b}$ and $p_{i + 1,j}^{a',b'}$.
    However, it can only do so via one of the $2\varepsilon$-tall corridors.
    Since the edge is not horizontal in $\Drawing$, it must cross at least one obstacle by our choice of $\varepsilon$, contradicting the fact that $\Drawing$ is a planar, i.e., it cannot be a solution.
    Hence, $b = b'$ must hold and by symmetric arguments we can also show that for $x_{i, j + 1} = (a'', b'')$ we have $a = a''$.
    Combining all, we conclude that our selection forms a solution to $(\eta, \kappa, \mathcal{X})$.
    
    \proofsubparagraph{($\boldsymbol{\Leftarrow}$)}
    Let $\mathcal{B} = \{x_{i,j} \mid i,j \in [\kappa]\}$ be a solution to $(\eta, \kappa, \mathcal{X})$.
    We construct a solution $\Drawing$ of \InstanceLong as follows.
    First, we use for every obstacle its unique vertex placement.
    Recall that every vertex of an obstacle is single-labeled.
    Afterwards, for every vertex $v_{i,j} \in V(G)$, we set $\Drawing(v_{i,j}) = p_{i,j}^{a,b}$ where $x_{i,j}= \{(a,b)\} \in \mathcal{B}$, for all $i,j \in [\kappa]$.
    As $\mathcal{B}$ is a solution, this is well-defined.
    By construction of \InstanceLong, every vertex is embedded onto an admissible point.
    To see that $\Drawing$ is also planar, we recall, on the one hand, that the obstacles are planar by construction.
    On the other hand, since $\mathcal{B}$ is a solution, we have that if $x_{i,j} = (a,b)$, $x_{i+1,j} = (x',y')$ and $a_{i,j + 1} = (x'',y'')$ then $b = b'$ and $a = a''$ (if the respective tuples exist in $\mathcal{B}$).
    This implies that the edges of the $\kappa\times \kappa$-grid $K$ are axis-aligned in $\Drawing$.
    Hence, they run through the $2\varepsilon$-wide corridors between the obstacles, i.e., no edge of $K$ crosses an obstacle.
    Finally, since $K$ is planar by construction, we conclude that $\Drawing$ is a solution.
\end{prooflater}

Since the construction can be carried out in polynomial time, and only the $\kappa^2$ vertices $v_{i,j}$, $i,j \in [\kappa]$ have more than one admissible point, i.e., are missing, we obtain \Cref{thm:extension-w1}.%

\end{statelater}
\begin{restatable}\restateref{thm:extension-w1}{theorem}{theoremExtensionWOne}
	\label{thm:extension-w1}
	\PartialListPointEmbedding is \W\textup{[1]}-hard parameterized by $\nadd$.
\end{restatable}

\shortLong{
\subparagraph{Adding the Surplus for \textsf{FPT}.}
\Cref{thm:extension-w1} excludes fixed parameter tractability of \PartialListPointEmbedding when parameterized by \nadd alone.
We now extend our parameterization by the \emph{surplus} $\zeta \coloneqq s - n$ of an instance \Instance; note that \PartialListPointEmbedding remains \NP-hard for $\zeta = 0$~\cite{Cab.Pev.2006}.

Our proof strategy is similar to \Cref{thm:extension-xp}:
The assignment for all but the $\nadd$ missing vertices is fixed.
For the remaining $\nadd$ vertices, we branch to determine their assignment to the $\zeta + \nadd$ still unassigned points.
In each resulting branch, every vertex is placed on one of its admissible points and it remains to check whether it corresponds to a solution.
}{
\subsection{Additionally Parameterizing by the Surplus}
\label{sec:extension-fpt}
In the light of \Cref{thm:extension-w1}, which excludes fixed parameter tractability of \PartialListPointEmbedding when parameterized by \nadd alone, we now extend our parameterization by the \emph{surplus} $\zeta \coloneqq s - n$ of an instance \Instance.
Note that \PartialListPointEmbedding remains \NP-hard for $\zeta = 0$; see, for example, the hardness reduction by Cabello~\cite{Cab.Pev.2006}.
In this section, we show that \PartialListPointEmbedding is \FPT\ in $\nadd + \zeta$ using the following proof strategy.
The assignment for all but the $\nadd$ missing vertices is fix.
For the remaining $\nadd$ missing vertices, we branch to determine their assignment to the $\zeta + \nadd$ still unassigned points.
}

\begin{restatable}\restateref{thm:extension-fpt}{theorem}{theoremExtensionFPT}
	\label{thm:extension-fpt}
	\PartialListPointEmbedding can be solved constructively in time $\BigO{(\zeta + {\nadd})^{\nadd} \cdot (s + n \log n)}$ and is thereby \FPT\ %
    in $\nadd + \zeta$. %
\end{restatable}
\begin{prooflater}{ptheoremExtensionFPT}
	We first assign all $n - \nadd$ vertices $v \in V(G) \setminus \Vadd$ to their unique point $p \in S$ with $L(v) = \{p\}$ in $\BigO{n}$ time.
	If this assigns two vertices to the same point, we reject.
    Afterwards, we branch to determine the embedding of the $\nadd$ missing vertices $\Vadd$.
    Observe that $s - (n - \nadd) = s - n + \nadd = \zeta + \nadd$ points remain unassigned.
	Therefore, there are $\BigO{(\zeta + {\nadd})^{\nadd}}$ different branches.
	Each of the branches corresponds to a drawing $\Drawing$.
	Recall that we can check in $\BigO{s + n \log n}$ time whether it is as a realization, i.e., a solution (using, e.g., the Bentley–Ottmann segment intersection test~\cite{dBCvKO.CGA.2008}).
	Since the branching is exhaustive, the statement follows.
\end{prooflater}

\section{Concluding Remarks and Open Problems}
\label{sec:conclusion}
Our results establish a surprisingly tight boundary between tractability and hardness for \ListPointEmbedding.
Still, we see several interesting directions for future research:
First, \Cref{thm:convexDP} requires the graph to be connected, whereas \Cref{thm:hardness-matching} has many connected components.
Studying the complexity of \ConvexListPointEmbedding for graphs with given combinatorial embedding  parameterized by the number of connected components is thus an interesting direction for future work.
Second, \Cref{thm:hardness-paths} establishes \NP-hardness for bi-labeled cycles, which are 2-connected graphs, for general $S$.
We suspect that our hardness can be adapted for $3$-connected graphs, but leave this open for future work.
\NewText{Third, our hardness reduction behind \Cref{thm:hardness-paths} requires more points than vertices and when introducing dummy vertices, we have to ensure that the graph remains a path or a cycle.
We consider the problem of adapting the hardness construction such that the path (or cycle) has $s$ vertices to be a further direction for future work.
}

\bibliography{references}

@Article{Cab.Pev.2006,
  author    = {Cabello, Sergio},
  journal   = {Journal of Graph Algorithms and Applications},
  title     = {Planar embeddability of the vertices of a graph using a fixed point set is NP-hard},
  year      = {2006},
  issn      = {1526-1719},
  number    = {2},
  pages     = {353--363},
  volume    = {10},
  doi       = {10.7155/jgaa.00132},
  publisher = {Journal of Graph Algorithms and Applications},
}

@InProceedings{BV.pse.2012,
  author     = {Biedl, Therese and Vatshelle, Martin},
  booktitle  = {Proceedings of the twenty-eighth annual symposium on Computational geometry},
  title      = {The point-set embeddability problem for plane graphs},
  year       = {2012},
  month      = jun,
  pages      = {41--50},
  publisher  = {ACM},
  series     = {SoCG ’12},
  collection = {SoCG ’12},
  doi        = {10.1145/2261250.2261257},
}

@InProceedings{DM.HPS.2012,
  author    = {Durocher, Stephane and Mondal, Debajyoti},
  booktitle = {WALCOM: Algorithms and Computation},
  title     = {On the Hardness of Point-Set Embeddability},
  year      = {2012},
  pages     = {148--159},
  publisher = {Springer Berlin Heidelberg},
  doi       = {10.1007/978-3-642-28076-4_16},
  isbn      = {9783642280764},
  issn      = {1611-3349},
}

@InProceedings{DGDL+.kCP.2007,
  author    = {Di Giacomo, Emilio and Didimo, Walter and Liotta, Giuseppe and Meijer, Henk and Trotta, Francesco and Wismath, Stephen K.},
  booktitle = {Graph Drawing},
  title     = {k-Colored Point-Set Embeddability of Outerplanar Graphs},
  year      = {2007},
  pages     = {318--329},
  publisher = {Springer Berlin Heidelberg},
  doi       = {10.1007/978-3-540-70904-6_31},
  isbn      = {9783540709046},
  issn      = {1611-3349},
}

@Article{DGLT.DCG.2008,
  author    = {Di Giacomo, Emilio and Liotta, Giuseppe and Trotta, Francesco},
  journal   = {Algorithmica},
  title     = {Drawing Colored Graphs with Constrained Vertex Positions and Few Bends per Edge},
  year      = {2008},
  issn      = {1432-0541},
  month     = dec,
  number    = {4},
  pages     = {796--818},
  volume    = {57},
  doi       = {10.1007/s00453-008-9255-2},
  publisher = {Springer Science and Business Media LLC},
}

@InProceedings{FGL+.PSE.2013,
  author    = {Frati, Fabrizio and Glisse, Marc and Lenhart, William J. and Liotta, Giuseppe and Mchedlidze, Tamara and Nishat, Rahnuma Islam},
  booktitle = {Graph Drawing},
  title     = {Point-Set Embeddability of 2-Colored Trees},
  year      = {2013},
  pages     = {291--302},
  publisher = {Springer Berlin Heidelberg},
  doi       = {10.1007/978-3-642-36763-2_26},
  isbn      = {9783642367632},
  issn      = {1611-3349},
}

@Article{BDGL.Dcg.2008,
  author    = {Badent, Melanie and Di Giacomo, Emilio and Liotta, Giuseppe},
  journal   = {Theoretical Computer Science},
  title     = {Drawing colored graphs on colored points},
  year      = {2008},
  issn      = {0304-3975},
  month     = nov,
  number    = {2–3},
  pages     = {129--142},
  volume    = {408},
  doi       = {10.1016/j.tcs.2008.08.004},
  publisher = {Elsevier BV},
}

@Article{ADBF+.TPP.2015,
  author    = {Angelini, Patrizio and Di Battista, Giuseppe and Frati, Fabrizio and Jelínek, Vít and Kratochvíl, Jan and Patrignani, Maurizio and Rutter, Ignaz},
  journal   = {ACM Transactions on Algorithms},
  title     = {Testing Planarity of Partially Embedded Graphs},
  year      = {2015},
  issn      = {1549-6333},
  month     = apr,
  number    = {4},
  pages     = {1--42},
  volume    = {11},
  doi       = {10.1145/2629341},
  publisher = {Association for Computing Machinery (ACM)},
}

@InBook{Pat.EPS.2006,
  author    = {Patrignani, Maurizio},
  pages     = {380--385},
  publisher = {Springer Berlin Heidelberg},
  title     = {On Extending a Partial Straight-Line Drawing},
  year      = {2006},
  isbn      = {9783540316671},
  booktitle = {Graph Drawing},
  doi       = {10.1007/11618058_34},
  issn      = {1611-3349},
}

@Article{MNR.ECP.2015,
  author    = {Mchedlidze, Tamara and Nöllenburg, Martin and Rutter, Ignaz},
  journal   = {Algorithmica},
  title     = {Extending Convex Partial Drawings of Graphs},
  year      = {2015},
  issn      = {1432-0541},
  month     = jun,
  number    = {1},
  pages     = {47--67},
  volume    = {76},
  doi       = {10.1007/s00453-015-0018-6},
  publisher = {Springer Science and Business Media LLC},
}

@Misc{SCM.ETR.2024,
  author    = {Schaefer, Marcus and Cardinal, Jean and Miltzow, Tillmann},
  title     = {The Existential Theory of the Reals as a Complexity Class: A Compendium},
  year      = {2024},
  copyright = {Creative Commons Attribution 4.0 International},
  doi       = {10.48550/ARXIV.2407.18006},
  publisher = {arXiv},
}

@InProceedings{KW.EVP.1999,
  author    = {Kaufmann, Michael and Wiese, Roland},
  booktitle = {Graph Drawing},
  title     = {Embedding Vertices at Points: Few Bends Suffice for Planar Graphs},
  year      = {1999},
  pages     = {165--174},
  publisher = {Springer Berlin Heidelberg},
  doi       = {10.1007/3-540-46648-7_17},
  isbn      = {9783540466482},
  issn      = {1611-3349},
}

@Article{PW.EPG.2001,
  author    = {Pach, János and Wenger, Rephael},
  journal   = {Graphs and Combinatorics},
  title     = {Embedding Planar Graphs at Fixed Vertex Locations},
  year      = {2001},
  issn      = {0911-0119},
  month     = dec,
  number    = {4},
  pages     = {717--728},
  volume    = {17},
  doi       = {10.1007/pl00007258},
  publisher = {Springer Science and Business Media LLC},
}

@Article{DGLT.EGT.2006,
  author       = {Emilio {Di Giacomo} and
                  Giuseppe Liotta and
                  Francesco Trotta},
  title        = {On Embedding a Graph on Two Sets of Points},
  journal      = {Int. J. Found. Comput. Sci.},
  volume       = {17},
  number       = {5},
  pages        = {1071--1094},
  year         = {2006},
  doi          = {10.1142/S0129054106004273},
}

@Book{Die.GT4.2012,
  author    = {Reinhard Diestel},
  publisher = {Springer},
  title     = {{G}raph {T}heory, {4th} {E}dition},
  year      = {2012},
  isbn      = {978-3-642-14278-9},
  series    = {Graduate {T}exts in {M}athematics},
  volume    = {173},
}

@Book{dBCvKO.CGA.2008,
  author    = {Mark de Berg and Otfried Cheong and Marc J. van Kreveld and Mark H. Overmars},
  publisher = {Springer},
  title     = {Computational {G}eometry: {A}lgorithms and {A}pplications, 3rd {E}dition},
  year      = {2008},
}

@Article{CKX.Iub.2010,
  author  = {Jianer Chen and Iyad A. Kanj and Ge Xia},
  journal = {Theoretical Computer Science},
  title   = {Improved upper bounds for vertex cover},
  year    = {2010},
  number  = {40–42},
  pages   = {3736--3756},
  volume  = {411},
  doi     = {10.1016/j.tcs.2010.06.026},
}

@Article{HK.AMM.1973,
  author  = {Hopcroft, John E. and Karp, Richard M.},
  journal = {SIAM Journal on Computing},
  title   = {An $n^{5/2} $ Algorithm for Maximum Matchings in Bipartite Graphs},
  year    = {1973},
  number  = {4},
  pages   = {225--231},
  volume  = {2},
  doi     = {10.1137/0202019},
}

@Book{CFK+.PA.2015,
  author    = {Marek Cygan and Fedor V. Fomin and Lukasz Kowalik and Daniel Lokshtanov and D{\'{a}}niel Marx and Marcin Pilipczuk and Michal Pilipczuk and Saket Saurabh},
  publisher = {Springer},
  title     = {Parameterized {A}lgorithms},
  year      = {2015},
  isbn      = {978-3-319-21274-6},
  doi       = {10.1007/978-3-319-21275-3},
}

@book{vazirani2001approximation,
  title={Approximation algorithms},
  author={Vazirani, Vijay V},
  volume={1},
  year={2001},
  publisher={Springer}
}

@InProceedings{BergKhosravi2010,
author="de Berg, Mark
and Khosravi, Amirali",
editor="Thai, My T.
and Sahni, Sartaj",
title="Optimal Binary Space Partitions in the Plane",
booktitle="Computing and Combinatorics",
year="2010",
publisher="Springer Berlin Heidelberg",
address="Berlin, Heidelberg",
pages="216--225",
}

@article{GMP+.Ept.1991,
  title={Embedding a planar triangulation with vertices at specified points},
  author={Peter Gritzmann and Bojan Mohar and J{\'a}nos Pach and Richard Pollack},
  journal={American Mathematical Monthly},
  year={1991},
  volume={98},
  pages={165-166},
  doi={10.2307/2323956}
}

@inproceedings{CU.SLE.1996,
author = {Casta\~{n}eda, Netzahualcoyotl and Urrutia, Jorge},
title = {Straight Line Embeddings of Planar Graphs on Point Sets},
year = {1996},
publisher = {Carleton University Press},
booktitle = {Proceedings of the 8th Canadian Conference on Computational Geometry},
pages = {312–318},
numpages = {7}
}

@article{ABB+.SUP.2018,
  author       = {Patrizio Angelini and
                  Till Bruckdorfer and
                  Giuseppe Di Battista and
                  Michael Kaufmann and
                  Tamara Mchedlidze and
                  Vincenzo Roselli and
                  Claudio Squarcella},
  title        = {Small Universal Point Sets for k-Outerplanar Graphs},
  journal      = {Discret. Comput. Geom.},
  volume       = {60},
  number       = {2},
  pages        = {430--470},
  year         = {2018},
  doi          = {10.1007/S00454-018-0009-X},
}

@article{Kur.Alb.2004,
  author       = {Maciej Kurowski},
  title        = {A 1.235 lower bound on the number of points needed to draw all \emph{n}-vertex
                  planar graphs},
  journal      = {Inf. Process. Lett.},
  volume       = {92},
  number       = {2},
  pages        = {95--98},
  year         = {2004},
  doi          = {10.1016/J.IPL.2004.06.009},
}

@article{BCD+.SUP.2014,
  author       = {Michael J. Bannister and
                  Zhanpeng Cheng and
                  William E. Devanny and
                  David Eppstein},
  title        = {Superpatterns and Universal Point Sets},
  journal      = {J. Graph Algorithms Appl.},
  volume       = {18},
  number       = {2},
  pages        = {177--209},
  year         = {2014},
  doi          = {10.7155/JGAA.00318},
}

@article{DPP.Hdp.1990,
  author       = {Hubert de Fraysseix and
                  J{\'{a}}nos Pach and
                  Richard Pollack},
  title        = {How to draw a planar graph on a grid},
  journal      = {Comb.},
  volume       = {10},
  number       = {1},
  pages        = {41--51},
  year         = {1990},
  doi          = {10.1007/BF02122694},
}

@inproceedings{Sch.EPG.1990,
  author       = {Walter Schnyder},
  editor       = {David S. Johnson},
  title        = {Embedding Planar Graphs on the Grid},
  booktitle    = {Proceedings of the First Annual {ACM-SIAM} Symposium on Discrete Algorithms,
                  22-24 January 1990, San Francisco, California, {USA}},
  pages        = {138--148},
  publisher    = {{SIAM}},
  year         = {1990},
}

@article{CHK.UPS.2015,
  author       = {Jean Cardinal and
                  Michael Hoffmann and
                  Vincent Kusters},
  title        = {On Universal Point Sets for Planar Graphs},
  journal      = {J. Graph Algorithms Appl.},
  volume       = {19},
  number       = {1},
  pages        = {529--547},
  year         = {2015},
  doi          = {10.7155/JGAA.00374},
}

@article{SSS.NUP.2020,
  author       = {Manfred Scheucher and
                  Hendrik Schrezenmaier and
                  Raphael Steiner},
  title        = {A Note on Universal Point Sets for Planar Graphs},
  journal      = {J. Graph Algorithms Appl.},
  volume       = {24},
  number       = {3},
  pages        = {247--267},
  year         = {2020},
  doi          = {10.7155/JGAA.00529},
}

@article{GGL+.CCC.2020,
  author       = {Emilio {Di Giacomo} and
                  Leszek Gasieniec and
                  Giuseppe Liotta and
                  Alfredo Navarra},
  title        = {On the curve complexity of 3-colored point-set embeddings},
  journal      = {Theor. Comput. Sci.},
  volume       = {846},
  pages        = {114--140},
  year         = {2020},
  doi          = {10.1016/J.TCS.2020.09.027},
}

@article{GJL.CPS.2021,
  author       = {Emilio {Di Giacomo} and
                  Jaroslav Hancl Jr. and
                  Giuseppe Liotta},
  title        = {2-colored point-set embeddings of partial 2-trees},
  journal      = {Theor. Comput. Sci.},
  volume       = {896},
  pages        = {31--45},
  year         = {2021},
  doi          = {10.1016/J.TCS.2021.09.045},
}

@article{CET+.DGP.2012,
  author       = {Erin W. Chambers and
                  David Eppstein and
                  Michael T. Goodrich and
                  Maarten L{\"{o}}ffler},
  title        = {Drawing Graphs in the Plane with a Prescribed Outer Face and Polynomial
                  Area},
  journal      = {J. Graph Algorithms Appl.},
  volume       = {16},
  number       = {2},
  pages        = {243--259},
  year         = {2012},
  doi          = {10.7155/JGAA.00257},
}

@inproceedings{eghkn-ep1d-20,
	Author = {Eiben, Eduard and Ganian, Robert and Hamm, Thekla and Klute, Fabian and Nöllenburg, Martin},
	Title = {Extending Partial 1-Planar Drawings},
	Booktitle = {Automata, Languages, and Programming (ICALP'20)},
	Year = {2020},
    Editor = {Czumaj, Artur and Dawar, Anuj and Merelli, Emanuela},
    Volume = {168},
    Pages = {43:1--43:19},
    Series = {LIPIcs},
    Publisher = {Schloss Dagstuhl--Leibniz-Zentrum für Informatik},
    Doi = {10.4230/LIPIcs.ICALP.2020.43},
    Ee = {2004.12222},
}

@inproceedings{dfgn-pcesl-24,
	Author = {Depian, Thomas and Fink, Simon D. and Ganian, Robert and Nöllenburg, Martin},
	Title = {The Parameterized Complexity of Extending Stack Layouts},
	Booktitle = {Graph Drawing and Network Visualization (GD'24)},
	Year = {2024},
    Editor = {Felsner, Stefan and Klein, Karsten},
    Volume = {320},
    Pages = {12:1--12:17},
    Series = {LIPIcs},
    Publisher = {Schloss Dagstuhl -- Leibniz-Zentrum für Informatik},
    _Url = {https://arxiv.org/abs/2409.02833},
    Doi = {10.4230/LIPIcs.GD.2024.12},
    Ee = {2409.02833},
}

@Article{HT.EAG.1973,
  author  = {Hopcroft, John and Tarjan, Robert},
  journal = {Communications of the ACM},
  title   = {Efficient {A}lgorithms for {G}raph {M}anipulation {[H]} ({A}lgorithm 447)},
  year    = {1973},
  number  = {6},
  pages   = {372--378},
  volume  = {16},
  doi     = {10.1145/362248.362272},
}

@article{Mit.LAR.1979,
  author       = {Sandra L. Mitchell},
  title        = {Linear Algorithms to Recognize Outerplanar and Maximal Outerplanar Graphs},
  journal      = {Information Processing Letters},
  volume       = {9},
  number       = {5},
  pages        = {229--232},
  year         = {1979},
  doi          = {10.1016/0020-0190(79)90075-9},
}

@Article{DFGN.PCE.2026,
  author  = {Depian, Thomas and Fink, Simon D. and Ganian, Robert and Nöllenburg, Martin},
  journal = {Journal of Graph Algorithms and Applications},
  title   = {The Parameterized Complexity Of Extending Stack Layouts},
  year    = {2026},
  number  = {3},
  pages   = {39--78},
  volume  = {29},
  doi     = {10.7155/jgaa.v29i3.3221},
}

@inproceedings{bgkmn-eopgdft-23,
	Author = {Bhore, Sujoy and Ganian, Robert and Khazaliya, Liana and Montecchiani, Fabrizio and Nöllenburg, Martin},
	Title = {Extending Orthogonal Planar Graph Drawings is Fixed-Parameter Tractable},
	Booktitle = {Computational Geometry (SoCG'23)},
	Year = {2023},
    Editor = {Chambers, Erin W. and Gudmundsson, Joachim},
    Volume = {258},
    Pages = {18:1--18:16},
    Series = {LIPIcs},
    Publisher = {Schloss Dagstuhl -- Leibniz-Zentrum für Informatik},
    Doi = {10.4230/LIPIcs.SoCG.2023.18},
    Ee = {2302.10046},
}

@article{bgkmn-eopgdft-25,
	Author = {Bhore, Sujoy and Ganian, Robert and Khazaliya, Liana and Montecchiani, Fabrizio and Nöllenburg, Martin},
	Title = {Extending Orthogonal Planar Graph Drawings is Fixed-parameter Tractable},
	Journal = {J. Computational Geometry},
	Year = {2024},
    Volume = {15},
    Number = {2},
    Pages = {3--39},
    Doi = {10.20382/jocg.v15i2a1},
}

@inproceedings{PLP,
	Author = {Depian, Thomas and Fink, Simon D. and Klemz, Boris and Ganian, Robert and Nöllenburg, Martin and Sieper, Marie Diana},
	Title = {Partial level planarity parameterized by the size of the missing graph},
	Booktitle = {Proc. 41st European Workshop on Computational Geometry (EuroCG'25)},
	Year = {2025},
    Editor = {Balko, Martin and Kratochvíl, Jan and Liotta, Giuseppe},
    Pages = {50:1--50:10},
    url = {https://kam.mff.cuni.cz/conferences/eurocg2025/booklet2025.pdf}
}

@InProceedings{QLE,
  author    = {Thomas Depian and Simon D. Fink and Robert Ganian and Martin N{\"{o}}llenburg},
  booktitle = {Proc. 51st International Workshop on Graph-Theoretic Concepts in Computer Science (WG'25)},
  title     = {The {P}eculiarities of {E}xtending {Q}ueue {L}ayouts},
  year      = {2026},
  doi       = {10.1007/978-3-032-11835-6_13},
  pages     = {177--191},
  editor    = {Henning Fernau and Philipp Kindermann},
}

\end{document}